\title{The Complexity of Packing Edge-Disjoint Paths}
\titlerunning{The Complexity of Packing Edge-Disjoint Paths}
\author{Jan Dreier}{Dept.\ of Computer Science, RWTH Aachen University, Germany}{dreier@cs.rwth-aachen.de}{https://orcid.org/0000-0002-2662-5303}{}
\author{Janosch Fuchs}{Dept.\ of Computer Science, RWTH Aachen University, Germany}{fuchs@algo.rwth-aachen.de}{https://orcid.org/0000-0003-3993-222X}{}
\author{Tim A.~Hartmann}{Dept.\ of Computer Science, RWTH Aachen University, Germany}{hartmann@algo.rwth-aachen.de}{https://orcid.org/0000-0002-1028-6351}{}
\author{Philipp Kuinke}{Dept.\ of Computer Science, RWTH Aachen University, Germany}{kuinke@cs.rwth-aachen.de}{https://orcid.org/0000-0001-9716-6346}{}
\author{Peter Rossmanith}{Dept.\ of Computer Science, RWTH Aachen University, Germany}{rossmani@cs.rwth-aachen.de}{https://orcid.org/0000-0003-0177-8028}{}
\author{Bjoern Tauer}{Dept.\ of Computer Science, RWTH Aachen University, Germany}{tauer@algo.rwth-aachen.de}{}{}
\author{Hung-Lung Wang}{Computer Science and Information Engineering, National Taiwan Normal University, Taiwan}{hlwang@gapps.ntnu.edu.tw}{}{}
\authorrunning{J.\ Dreier, J.\ Fuchs, T.\ Hartmann, P.\ Kuinke, P.\ Rossmanith, B.\ Tauer and H.\-L.\ Wang}
\keywords{parameterized complexity, embedding, packing, covering,
Hamiltonian path, unary binpacking, path-perfect graphs}
\newcommand{\N}{\ensuremath{\mathbf{N}}\xspace} 
\def\suchthat{such that\xspace}
\def\topgrad_#1{\widetilde \nabla\!_{#1}}
\newlength{\leftbarwidth}
\newlength{\leftbarsep}
\renewenvironment{leftbar}[1][blue]
{%
\MakeFramed{\hsize\hsize\advance\hsize-\width\FrameRestore}%
}
{\endMakeFramed}
\def\pp{path packing\xspace}
\def\PP{\textsc{Path Packing}\xspace}
\def\epc{exact path packing\xspace}
\def\EPC{\textsc{Exact Path Packing}\xspace}
\def\ppp{path-perfect packing\xspace}
\def\PPP{\textsc{Path-Perfect Packing}\xspace}
\newcommand{\mwnp}{\textsc{Multi-Way Number Partition}\xspace}
\def\L{{\cal L}}
\def\S{{\cal S}}
\newcommand{\np}{\textsc{NP}\xspace}
\newcommand{\xp}{\textsc{XP}\xspace}
\newcommand{\fpt}{\textsc{FPT}\xspace}
\def\PIP{\textsc{Length-$i$ Exact Edge Packing}\xspace}
\newcommand{\pthreep}{\textsc{Length-$3$ Exact Edge Packing}\xspace}
\newcommand{\ptwop}{\textsc{Length-$2$ Exact Edge Packing}\xspace}
\newcommand{\bcd}{\mbox{bcd}}
\newcommand{\short}[2]{#2}
\newtheorem{observation}[theorem]{Observation}
\begin{document}

\maketitle

\begin{abstract}
  We introduce and study the complexity of \PP.
  Given a graph \( G \) and a list of paths, the task is to embed the paths edge-disjoint in \( G \).
  This generalizes the well known \textsc{Hamiltonian-Path} problem.
  
  Since \textsc{Hamiltonian Path} is efficiently solvable for graphs of small treewidth, we study how this result translates to the much more general \PP.
  On the positive side, we give an \fpt-algorithm on trees for the number of paths as parameter.
  Further, we give an \xp-algorithm with the combined parameters maximal degree, number of connected components and number of nodes of degree at least three.
  Surprisingly the latter is an almost tight result by runtime and parameterization.
  We show an ETH lower bound almost matching our runtime.
  Moreover, if two of the three values are constant and one is unbounded
  the problem becomes \np-hard.
  
  Further, we study restrictions to the given list of paths.
  On the positive side, we present an \fpt-algorithm parameterized by the sum of the lengths of the paths.
  Packing paths of length two is polynomial time solvable, while packing paths of length three is \np-hard.
  Finally, even the spacial case \EPC where the paths have to cover
  every edge in \( G \) exactly once is already \np-hard for two paths on 4-regular graphs.
\end{abstract}

\section{Introduction}

Packing, covering and partitioning are well researched fields in graph theory. 
In general, the task is to cover a given graph $G=(V,E)$ with or partition it into smaller substructures, or to pack given structures into the graph.
Besides that these terms are often used, they are not well defined throughout the literature. 
Thus, it is important to define problems in this field carefully and in detail. 

For example, the path partition problem is a well studied problem \cite{Arikati:1990:LAO:92325.92332, Bonuccelli:1979:MND:1710823.1710824,CHANGKUO96, Goodman:1975:AHC:321892.321897,MISRA197524,SRIKANT1993351,YAN1994317} which is also known as path cover problem. 
The task is to cover all vertices of a graph with vertex-disjoint paths. 
This is equivalent to partitioning the graph into vertex-disjoint paths. 
The smallest number of paths to achieve this is called the path partition number or path cover number. 
Observe that $G$ has a Hamiltonian path iff the path-partition number is one, thus the problem is NP-complete. 

An \np-complete variant of this problem is the $k$-path partition
problem~\cite{STEINER20032147, yan1997k,MONNOT2007677}. Here the task is to partition a graph
$G$ into paths, such that none of the path lengths exceeds
$k$. Observe that the 1-path-partition problem corresponds to finding a maximum matching. 

Another related problem is the recognition of path-perfect graphs \cite{cao2007solution,fink1981note,hamburger2005edge,straight1977partitions,zaks1977decomposition}, which we denote in this work as \PPP. 
Instead of partitioning the graph into vertex-disjoint paths, the complete edge set must be partitioned into edge-disjoint paths of ascending length, starting by one. 
This can also be understood as packing $k$ paths of length $1$ to $k$ into $G$ without using an edge twice or leaving one edge uncovered. 

This approach of packing smaller subgraphs into a given graph is also well researched \cite{yap1988packing}. 
For example, packing edge-disjoint trees into a clique is considered~\cite{straight1979packing}. 
Since packing edge-disjoint and vertex-disjoint triangles is \np-hard for planar graphs, the parameterized complexity is studied \cite{cornuejols1982packing}. 

We generalize the path-perfect graph problem and ask for a given graph $G$ and a list of $k$ paths $P = \{p_1,\ldots,p_k\}$ if they can be embedded into $G$ without using the same edge twice. 
Note that we define the length of a path equals its number of edges. 
This problem arises naturally when restricting the path partition problem
to edge-disjoint paths instead of vertex-disjoint paths.
We denote this problem as \emph{\PP}.
Let us formalize what we mean by embedding.
An embedding of a graph \( H \) into a graph \( G \) is an injective mapping \( f: V(H) \to V(G) \) such that for every original edge \( (u,v) \in E(H) \) also \( (f(u),f(v)) \in E(G) \).
An embedding of a list of graphs \( \mathcal{H} \) into \( G \) is an embedding of each graph \( H \) into \( G \).
Note, that we do not ask to embed the graphs pairwise
vertex-disjointly.
The embeddings we consider in this work are pairwise edge-disjoint embeddings of paths.

\medskip
\begin{tabularx}{0.95\textwidth}{ r X l }
	\multicolumn{2}{l}{\PP} \\
	Input: & A list of paths \( P = \{p_{1},\dots,p_{k} \} \) of length \(l_1,\ldots,l_k\). A graph \(G=(V,E)\). \\
	Question: & Is there an edge-disjoint embedding of $P$ into $G$?
\end{tabularx}
\medskip

The \emph{\EPC} problem additionally requires that every edge is covered \emph{exactly} once.

\medskip
	\begin{tabularx}{0.95\textwidth}{ r X l }
		\multicolumn{2}{l}{\EPC} \\
		Input: & A list of paths \( P = \{p_{1},\dots,p_{k} \} \) of length \(l_1,\ldots,l_k\). A graph \(G=(V,E)\). \\
		Question: & Is there an edge-disjoint embedding of $P$ into $G$ \suchthat each edge $e\in E$ is covered exactly once?
	\end{tabularx}
\medskip

\PP is clearly more general than \EPC, since one can reduce from one to the other by additionally requiring
the sum of the path lengths to be equal to the number of edges in the graph.
Most of our hardness results are for \EPC,
and therefore translate to \PP.
Our upper bounds are always regarding more the general \PP.

\paragraph*{Our Results}

The Hamiltonian path problem is a special case of \PP.
An even though the Hamiltonian path problem is tractable on graphs of bounded treewidth, 
\PP is already \np-complete on subdivided stars. 
Therefore, we focus on the parameterized complexity to classify this problem on a finer scale.
We will analyze the impact of various parameters.

In \Cref{sec:trees}, 
we analyze the parameterized complexity of our packing problems with respect to the number of paths (denoted by $k$).
On the one hand, we give an \fpt algorithm for \PP
that solves the problem in time $2^k n^{O(1)}$ on subcubic (i.e.\ degree at most three) forests (\cref{the:pp-dp-subcubic-forest}).
On the other hand, we show that \EPC on graphs with treewidth two is $W[1]$-hard and cannot be solved in time $f(k) n^{o(k/\log k)}$ under ETH (\cref{the:pc-w1hard-bounded-treewidth}).

In \Cref{sec:pathDependent} we introduce path dependent restrictions. 
We show that \EPC is \np-complete even for two paths on $4$-regular graphs (Theorem~\ref{the:ecp-npcomplete-two-paths}). 
length $i$ is easy for $i=2$ and \np-complete for $i=3$ (Theorem~\ref{the:pip-nphard}).
If we however parameterize by the summed length of all paths
\PP is in FPT~(Theorem~\ref{the:fptsummedpathlength}).

After parameterizing by the number of paths and their lengths, we further analyze graph dependent parameters in \Cref{sec:bcd}. 
We introduce the \emph{\bcd-number} of a graph, which is the maximum of the
number of components, the maximal degree, and the number of vertices with degree larger than two. 
We show that \PP can be solved in time $k!^k (n+k^2)^{O(k^2)}$, where $k$ is the \bcd-number~(\cref{the:pp-upper-bound-fpt}).
This is complemented by showing that the problem cannot be solved in $f(k) n^{o(k^2/\log k)}$ under ETH~(\cref{the:pp-lower-bound-fpt}).
We further show that all three bcd parameters are necessary:
If two values are constant and one is unbounded
the problem becomes \np-hard~(\cref{the:epc-nphard-forests}, \cref{col:NP-on-forests-of-paths}, \cref{the:ppc-npcomplete-caterpillars}).

Note that, one can embed paths $p_1,\dots,p_k$ as edge-disjoint subgraphs into a graph $G$ if and only if
one can embed these paths as vertex-disjoint induced subgraphs into the linegraph of $G$.
Therefore, our results yield new insights for the 
problem of covering a graph with a list of vertex-disjoint induced paths~\cite{le2003splitting}.
Especially, our hardness results for certain graph classes transfer
to hardness results on the linegraphs of these graph classes.
\short{Due to space limitations, we omit some proofs, and refer to the full version.}{}

\section{Preliminaries}
\label{section:preliminaries}

All graphs are simple (i.e. without multi-edges or self-loops). 
\short{The length of a path equals its number of edges.}{The length of a path $p$ is denoted by $|p|$ and equals its number of edges.}

\section{\PP on Forests} \label{sec:trees}

Our packing problem is a generalization of the Hamiltonian path problem and therefore \np-complete.
The Hamiltonian path problem is solvable in polynomial time if the treewidth of the input graph is bounded \cite{downey2012parameterized}. 
We show that (unlike Hamiltonian path) \EPC is \np-complete on trees.
This is done by reducing it to the following \np-complete partitioning problem.

\medskip
\begin{tabularx}{0.95\textwidth}{ r X p{0.5cm} }
	\multicolumn{2}{l}{\mwnp} \\
	Input: & A list of weights $w_1, \ldots w_n \in \N $ encoded in unary, and an integer \( k \in \N \). \\
	Question: & Is there a partition of \( w_1, \ldots w_n \) into \( k \) multi-sets \( S_1, \dots, S_k \) such that ${\sum_{w_i \in S_j} w_i = \frac{1}{k}\sum_{i=1}^{n} w_i}$, for every $1 \leq j \leq k$?
\end{tabularx}
\medskip
 
We reduce from \mwnp to prove that \EPC is \np-hard on very simple trees.

\begin{theorem}\label{the:epc-nphard-forests}
	\EPC is \np-complete on subdivided stars.
\end{theorem}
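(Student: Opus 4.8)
The plan is to reduce from \mwnp; membership in \np is immediate, since an embedding is a polynomial-size certificate that can be verified in polynomial time. Given weights $w_1,\dots,w_n$ (encoded in unary) and a number of parts $k$, I would set $B=\tfrac1k\sum_i w_i$; if this is not an integer I output a fixed no-instance, so assume $B\in\N$. The subdivided star $G$ then consists of a single center $c$ together with $k$ legs, each a path of length exactly $B$, and the list of paths to be packed consists of $n$ paths of lengths $w_1,\dots,w_n$, one per weight. Because the weights are in unary, both $G$ and the list have size polynomial in the input. Since $\sum_i w_i=kB=|E(G)|$, any edge-disjoint embedding of the whole list is automatically forced to cover each edge exactly once, so this instance is at once a \PP and an \EPC instance.

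For the forward direction, suppose the weights split into parts $S_1,\dots,S_k$ with $\sum_{w\in S_j}w=B$. I would embed the paths assigned to $S_j$ consecutively along the $j$-th leg, starting at $c$. Their lengths sum to $B$, which is exactly the leg's length, so they tile the leg without gaps or overlaps; doing this on every leg covers each edge of $G$ exactly once, giving a valid exact packing.

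For the backward direction I would start from an exact packing and read off a partition. The decisive structural claim is that, in this instance, every embedded path lies inside a single leg, that is, no embedded path has $c$ as an interior vertex. Granting this, each leg is cut into consecutive subpaths whose lengths form a sub-multiset of $\{w_1,\dots,w_n\}$ summing to the leg length $B$; assigning to $S_j$ the weights of the paths drawn on the $j$-th leg then produces $k$ parts of equal sum $B$, as required.

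Proving this no-interior-vertex claim is where I expect essentially all the difficulty to lie. A path passing through $c$ simultaneously occupies the innermost segments of two distinct legs, which amounts to splitting one weight between two parts, and such \emph{crossing} packings need not arise from any genuine partition. Because a subdivided star branches only at $c$, crossings cannot be excluded by an arithmetic condition on the weights alone, so I would rule them out structurally: I would attach a small gadget at $c$ (padding the path list correspondingly) that forces every path meeting $c$ to \emph{end} at $c$, so that $c$ is never traversed and the legs behave as independent segments. Designing this gadget, and verifying that it leaves the forward construction intact while excluding all pass-through paths, is the crux of the proof.
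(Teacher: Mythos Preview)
Your reduction is headed in the right direction, but the gap you yourself flag is fatal as stated, and the fix you sketch does not obviously exist. Concretely, take $k=2$ and weights $\{2,2,3,5\}$: there is no partition into two sets of sum $6$, yet in your star with two legs of length $6$ (which is just a path on $12$ edges) one trivially tiles the path by segments of lengths $2,2,3,5$ in order. So without the ``gadget at $c$'' your instance already produces a false positive. Any gadget that keeps $G$ a subdivided star can only add further legs at $c$, and it is not at all clear how extra legs (together with extra paths to cover them) can \emph{forbid} a weight-path from passing through $c$ between two of the original legs; you have deferred precisely the step that carries the proof.

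The paper avoids the whole issue by \emph{not} trying to prevent crossings. It uses $2k$ legs (two per part) of equal length and scales the weights so that the total matches; then in the backward direction a path that has $c$ as an interior vertex simply \emph{defines} which two legs are paired. Since each leg has a unique edge incident to $c$, at most one crossing path can enter a given leg, so the crossing paths induce a partial matching on the $2k$ legs; the remaining center-incident edges are used by paths that end at $c$, there are evenly many such legs, and they can be paired arbitrarily. Each of the resulting $k$ leg-pairs has the same total length, so the weights of the paths lying in a pair sum to $B$. The forward direction is the same as yours. In short: rather than engineering a gadget to block crossings, double the number of legs and let crossings do the pairing for you.
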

\short{}{
\begin{proof}
	We construct an instance of the \epc problem from an instance of \mwnp problem as follows. 
	We create for each weight $w_j$ a path $p_j$ of length $w_j \cdot 2k$, with $1 \leq j \leq n$ and $n$ the number of weights. 
	So, the sum of the path lengths is $2k \cdot \sum_{j=1}^{n} w_j$. 
	To represent the sets in the \mwnp we construct $k$ paths of length $2 \cdot \sum_{j=1}^{n} w_j$ that share the center vertex $v$.
	Thus, we obtain a subdivided star where the center vertex $v$ is connected to $2k$ paths of length $\sum_{j=1}^{n} w_j$. 
	
	Figure \ref{StarReduction} shows the graph that results from an instance of the \mwnp problem with the weights $2,3,4,6,7,8$ that should be partitioned into three sets. 
	The overall sum of the weights is $30$ and therefore the graph in Figure \ref{StarReduction} has six paths of length $30$ connected to one vertex $v$.
	For each weight, we construct one path with the length equal to the weight multiplied with six. 
	Thus, the graph in Figure \ref{StarReduction} needs to be covered by six paths of length $12, 18, 24, 36, 42$ and $48$. 
	
	Next, we need to show that the constructed \epc instance has a solution if and only if the \mwnp instance is feasible. 
	Given a solution for the \epc problem on the constructed graph, we can compute a solution for the multi-way number problem as follows.
	A pair of two paths represent one set $S_i$ and the paths $p_j$ that cover the edges correspond to the weights $w_j$ that must be selected by the set $S_i$, for $1 \leq i \leq k$. 
	The pair is unique if a path $p_j$ covers edges of both paths, like the paths of length $36, 42$ and $48$ from the example in Figure \ref{StarReduction}. 
	Otherwise, there is an even number of edge covering paths that start or end at the center vertex $v$ and these paths can be combined arbitrarily to represents a set $S_i$. 
	There are $k$ many pairs of paths of the same length where all edges are covered. 
	Thus, the resulting sets refer to weights which sum up to $\frac{1}{k} \sum_{j=1}^{n} w_j$
	
	If we have a solution for the multi-way number problem we can choose a pair of paths adjacent to the center vertex $v$ to represent a set $S_i$. 
	The weights $w_j$ with $j \in S_i$ correspond to the paths $p_j$ that cover the chosen path pair. 
	The sum of the path lengths covers exactly one path pair. 
	
	\[ 
	\sum_{j\in S_i} |p_j| = 2k \cdot \sum_{j \in S_i} w_j  =  2 \cdot \sum_{j=1}^{n} w_j \text{.}
	\]
	\end{proof}
}

\short{}{
\begin{figure}
	\begin{center}
		\begin{tikzpicture}[x=0.6cm,y=0.8cm,>=latex,scale=0.6]
		\node[draw, circle, fill=lightgray, label={[label distance=0.09cm]180:$v$}] at (0,0) (v) {}; 
		
		\node[draw, circle, fill=lightgray, scale=0.5] at (2,-3) (-32) {}; 
		\node[draw, circle, fill=lightgray, scale=0.5] at (2,-2) (-22) {}; 
		\node[draw, circle, fill=lightgray, scale=0.5] at (2,-1) (-12) {};
		\node[draw, circle, fill=lightgray, scale=0.5] at (2,3) (32) {}; 
		\node[draw, circle, fill=lightgray, scale=0.5] at (2,2) (22) {}; 
		\node[draw, circle, fill=lightgray, scale=0.5] at (2,1) (12) {}; 
		
		\foreach \row in {-3,-2,-1,1,2,3}
		{
			\draw (v) edge (\row2) {};
			\foreach \x in {3,4,...,31}{
				\pgfmathtruncatemacro\xi{\x-1}
				\node[draw, circle, fill=lightgray, scale=0.5] at (\x,\row) (\row\x) {};
				\draw (\row\xi) edge (\row\x) {};
			}
		}
		\end{tikzpicture}
	\end{center}
	\caption{Subdivided Star construed from an instance of the \mwnp.}\label{StarReduction}
\end{figure}
}

The previous reduction required a large number of paths.
Therefore, in the following, we analyze the \PP problem parameterized by the number of paths.

\paragraph*{Fast subset convolution}

We develop dynamic programming algorithms on subcubic
trees whose running time will be $O^*(2^k)$, where $k$ is the number of
paths that we want to pack.  First we develop a naive and not too
complicated algorithm with running time $O^*(3^k)$, whose longer
running time is due to some very simple operation that occurs when we
combine two dynamic programming tables.
Björklund, Husfeldt, Kaski, and Koivisto
introduced a technique called
\emph{fast subset convolution} that was used to speed up the computation of
Steiner trees with small integer weights~\cite{BHKK07} and also to speed
up some algorithms that do dynamic programming on tree
decompositions~\cite{RBR2009}.  We can use this technique to our advantage
to significantly speed up the path packing algorithm on trees.  The
result that we will be using is:
\begin{proposition}{\rm\cite{BHKK07}}
\label{prop:fast-subset-convolution}
Let $N$ be a set of $n$ natural numbers and $f,g\colon\N\to\N$ two functions.
Then we can compute $(f*g)(S)$ for all $S\subseteq N$ in time
$O(2^nn^2)$ if $f$ and $g$ can be evaluated in constant time and where
$
(f*g)(S)=\sum_{T\subset S} f(T)g(S-T).
$
Here $f(S)=\sum_{i\in S}f(i)$.
\end{proposition}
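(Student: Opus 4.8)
The plan is to follow the now-standard trick of reducing the subset convolution to a rank-indexed family of \emph{union products}, each of which can be diagonalized by the subset (zeta) transform. Throughout I regard $f$ and $g$ as functions on the power set $2^N$; the constant-time evaluation hypothesis (for instance via the additive extension $f(S)=\sum_{i\in S}f(i)$) is exactly what lets me fill all $2^n$ initial table entries in $O(2^n)$ total and treat each arithmetic operation as $O(1)$.

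First I would set up the zeta transform $\hat f(S)=\sum_{T\subseteq S}f(T)$ and recall that, processing one element of $N$ at a time (Yates's algorithm), it can be computed for all $S\subseteq N$ in time $O(2^n n)$; the M\"obius transform is its inverse and costs the same. The key algebraic fact is that the zeta transform turns the \emph{union product} $(f\ast_\cup g)(S)=\sum_{A\cup B=S}f(A)g(B)$ into a pointwise product: expanding $\hat f(S)\hat g(S)=\sum_{A\subseteq S}\sum_{B\subseteq S}f(A)g(B)$ and grouping the terms by $C=A\cup B$ yields $\hat f(S)\hat g(S)=\widehat{f\ast_\cup g}(S)$. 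Hence a single union product costs only $O(2^n n)$: two zeta transforms, one pointwise multiplication, and one M\"obius inversion.

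The union product overcounts, since it does not force $A$ and $B$ to be disjoint, whereas the subset convolution $\sum_{T\subseteq S}f(T)g(S\setminus T)$ implicitly does. To repair this I would introduce cardinality as a rank. Writing $f_i(S)=f(S)$ if $|S|=i$ and $0$ otherwise (and likewise $g_j$), observe that whenever $A\cup B=S$ with $|A|=i$, $|B|=j$, and $|S|=i+j$, inclusion--exclusion forces $|A\cap B|=i+j-|S|=0$, so $A$ and $B$ are disjoint. Consequently
\[
(f\ast g)(S)=\sum_{i+j=|S|}(f_i\ast_\cup g_j)(S),
\]
so it suffices, for each target cardinality, to read off the correct rank-slice of the accumulated union products.

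Finally I would organize the computation to land at $O(2^n n^2)$ rather than the naive $O(2^n n^3)$. Precompute the $2(n{+}1)$ zeta transforms $\hat f_i,\hat g_j$ once, in total time $O(2^n n^2)$. For each target rank $k$ form $\widehat{H_k}(S)=\sum_{i+j=k}\hat f_i(S)\hat g_j(S)$; summed over all $k$ and all $S$ this is $\sum_k (k{+}1)\,2^n=O(2^n n^2)$ multiply-adds. Then M\"obius-invert each of the $n{+}1$ functions $\widehat{H_k}$ in $O(2^n n)$ apiece, and output $(f\ast g)(S)=H_{|S|}(S)$. The only genuine subtlety, and the step I would treat most carefully, is the rank bookkeeping: one must check that restricting $f_i\ast_\cup g_j$ to sets of size exactly $i+j$ recovers precisely the disjoint-union terms, which is exactly where the inclusion--exclusion identity is used; everything else is a transform identity plus a routine complexity count.
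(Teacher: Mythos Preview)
Your sketch is correct and is precisely the Bj\"orklund--Husfeldt--Kaski--Koivisto argument: rank-stratify by cardinality, turn each union product into a pointwise product via the zeta transform, and recover the disjoint terms by restricting $f_i\ast_\cup g_j$ to sets of size $i{+}j$. The paper does not supply its own proof of this proposition---it simply cites~\cite{BHKK07}---so your proposal coincides with the intended (external) proof.
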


\begin{figure}[t]
\centerline{\includegraphics{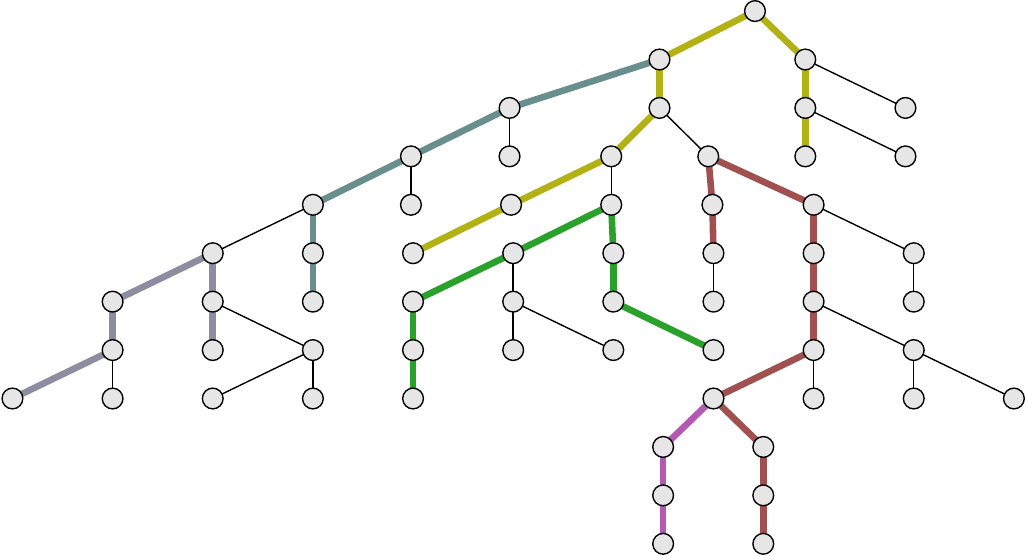}}
\caption{Packing paths of lengths 10, 8, 7, 5, 5, 3 into a subcubic
tree.  Although the packing looks very loose there is no solution
if we replace 3 by~4.}
\label{fig:pp-example1}
\end{figure}

\begin{theorem}\label{the:pp-dp-subcubic-forest}
We can solve \PP for $k$ paths in time $O^*(2^k)$ on
subcubic forests.
\end{theorem}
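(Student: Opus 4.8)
The plan is to root each tree of the forest at a leaf and run a bottom-up dynamic program over subtrees, combining children's tables with \emph{fast subset convolution} (\cref{prop:fast-subset-convolution}) to reach the $O^*(2^k)$ bound. The starting observation is that, because the host graph is a forest, every embedded path $p_i$ maps to a \emph{simple} path of length $l_i$ in the tree; hence, once a tree is oriented downward from its root, each embedded path splits at its topmost vertex (its \emph{apex}) into at most two descending arms. Since each vertex has at most two children in a subcubic tree rooted at a leaf, and each incident edge carries at most one path, the possible local behaviours at a vertex are severely restricted.

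For a vertex $v$ I would keep a table $f_v(S,h)$, where $S\subseteq\{1,\dots,k\}$ is the set of paths packed \emph{entirely} inside the subtree $T_v$, and $h$ is a \emph{hanging state}: either $h=0$ (no path uses the edge from $v$ to its parent) or $h=(i,r)$, meaning a single descending arm of $p_i$ ending at $v$ is already placed using $l_i-r$ edges, while $r\ge 1$ further edges of $p_i$ remain to be placed strictly above $v$ (the first being the parent edge). As the parent edge carries at most one path, at most one path may hang, so $h$ is a single pair and the number of hanging states is $O(\sum_i l_i)$, which is polynomial. A leaf has $f_v(\emptyset,0)=1$ and $f_v(\emptyset,(j,l_j))=1$ for every $j$, encoding that path $p_j$ may start at the leaf and ascend.

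The transition at an internal vertex $v$ with children $c_1,c_2$ enumerates, for each pair of hanging states $(h_1,h_2)$ supplied by the children, how the two (possibly empty) arms are resolved at $v$: an arm may continue upward as the new $h_v$, close as an endpoint at $v$ (feasible only when its remaining length is $1$), or two arms of the \emph{same} path $i$ may be joined to make $v$ their apex, which is valid exactly when $r_1+r_2=l_i+2$; additionally, a freshly started path may be hung upward. Each resolution fixes the set $C$ of newly completed paths and the target state $h_v$, and the set dimension obeys $S=S_1\uplus S_2\uplus C$ with $S_1,S_2$ disjoint subsets inherited from the two children. For a fixed resolution this is precisely a subset convolution over the set dimension, so by \cref{prop:fast-subset-convolution} it is computed for all $S$ in time $O^*(2^k)$. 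Since the number of distinct $(h_1,h_2)$-resolutions is polynomial (the join case in particular contributes only $O(\sum_i l_i)$ valid length pairs), each vertex costs $O^*(2^k)$ and the whole forest $O^*(2^k)$; the naive pointwise merge would instead run over $\sum_S 2^{|S|}=3^k$ disjoint splits, which explains the weaker $O^*(3^k)$ bound. Rooting each tree at a leaf forces the root's hanging state to $0$, so $f_{\rho}(S,0)$ records exactly which path sets one tree can pack, and a final subset convolution across the trees (asking whether $\{1,\dots,k\}$ decomposes into one packable set per tree) decides the instance in $O^*(2^k)$.

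The main obstacle I expect is organizing the vertex transition so that the arm-matching logic preserves the convolution structure: the remaining-length bookkeeping, the continue/endpoint/join case distinction, and the exclusion of the hanging path indices from the convolved ground set must be arranged so that, after fixing the polynomially many hanging configurations, what remains is a genuine disjoint-union convolution over $\{1,\dots,k\}$ to which \cref{prop:fast-subset-convolution} applies. Correctness then reduces to verifying that the apex decomposition makes these cases exhaustive and that the constraint $r_1+r_2=l_i+2$ characterizes precisely the feasible joins.
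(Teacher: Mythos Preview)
Your approach is correct and reaches the stated bound, but it is organised rather differently from the paper's proof.

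The paper's dynamic programming state at a vertex $v$ is a single integer per subset: $L(v,P)\in[n]\cup\{-\infty\}$ records the \emph{maximum length of an additional anonymous path} ending at $v$ that can be packed alongside the $P$-paths in $T(v)$. No path identity is tracked in the ``hanging'' slot; the decision of \emph{which} $p_i$ eventually uses that leftover space is deferred to the parent (one-child case: test whether $L(u,P\setminus\{i\})=l_i-1$ for some $i$) or to the join (two-child case: test whether some $i$ can bridge, i.e.\ $L(v_1,P_1)+L(v_2,P_2)\ge l_i$ for a bipartition of $P\setminus\{i\}$). The fast subset convolution is applied to these bipartition tests directly on $\{0,1\}$-indicator functions $f_i,g_i$ derived from $L$.

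Your table $f_v(S,h)$ instead \emph{commits early} to the identity and residual length of the hanging path. This blows up the non-set dimension from a single stored integer to a polynomial-size index $h\in\{0\}\cup\{(i,r):1\le r\le l_i\}$, and forces you to manage the interaction between the hanging index $i$ and the convolved ground set (you correctly flag this as the main obstacle: the hanging indices must be removed from the set over which you convolve, and re-inserted into $C$ only upon completion). This is all sound, and the per-vertex cost remains $O^*(2^k)$ since the number of $(h_1,h_2,\text{resolution})$ triples is polynomial.

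What each buys: the paper's ``anonymous leftover length'' encoding gives a strictly smaller table ($O(2^k n)$ total entries versus your $O(2^k\sum_i l_i\cdot n)$) and simpler transitions, since the one-child and join rules each reduce to a handful of closed-form cases with at most one convolution per target value $r$. Your encoding is closer to a direct state-machine description of the embedding and makes the correctness argument (via the apex decomposition and the $r_1+r_2=l_i+2$ join identity) more transparent, at the price of a larger state space and more bookkeeping to keep the convolution clean. Both routes are valid; the paper's is the more economical of the two.
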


\begin{proof}
Let us assume that the graph is a subcubic tree~$T$, but the proof
easily generalizes to subcubic forests.
Let $l_1,\ldots,l_k$ be
length of the paths that we want to pack into~$T$.  We can further
assume that $T$ is a rooted tree by designating an arbitrary vertex as
its root.  If $v$ is a vertex of $T$ then let $T(v)$ be the subtree
rooted at~$v$.

We solve the \pp problem by dynamic programming computing a table for
each vertex in a bottom-up order.  Such a table is a mapping $L\colon
V\times 2^{[k]}\to [n]\cup \{-\infty\}$.  The size of this table
is $O(2^kn)$.  We interpret the content of the table as follows:

$L(v,P)=r$ with $r\geq0$ means that it is possible
to pack all paths
with indices in $P$ (in short all $P$-paths)
into the subtree $T(v)$ and additionally a path of length~$r$
that ends in~$v$.

The special case $L(v,P)=0$ means that we can pack all $P$-paths
into $T(v)$, but however we pack them there is no space left to pack
another path that ends in~$v$.


If it is not possible to pack all $P$-paths into $T(v)$ at all then
let $L(v,P)=-\infty$.

It is quite clear that having computed all tables
enables us to find out whether $(T,P)$ is a yes-instance of the
\pp-problem.  Simply check whether $L(r, \{1,\ldots,k\})\neq-\infty$.

To compute the tables for all $v$ we distinguish three cases how to
compose trees into bigger trees:
{\bf 1} $v$ is a leaf,
{\bf 2} $v$ has one child,
{\bf 3} $v$ has at least two children.

{\bf Leaf. }
If $v$ is a leaf then $L(v,\emptyset)=0$ and $L(v,P)=-\infty$ if
$P\neq\emptyset$ because we cannot pack any path into an empty tree
(that has no edges).

{\bf One child. } 
\begin{figure}
	\centerline{\includegraphics{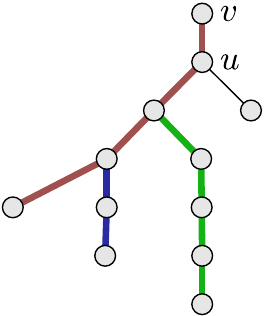}\hfil
		\includegraphics{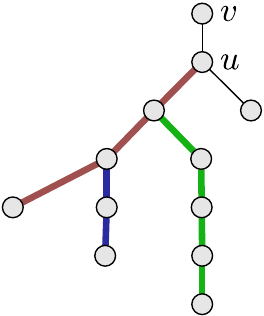}}
	\caption{Left side: Packing paths of lengths $l_1=4$, $l_2=4$, $l_3=2$ into
		$T(v)$.  $L(u,\{1,2,3\})=-\infty$, but $L(u,\{1,2,3\}-\{1\})=l_1-1$,
		so $L(v,\{1,2,3\})=0$.\\
		Right side:  Now $l_1=4$, $l_2=3$, $l_3=2$.  $L(u,\{1,2,3\})=1$, so
		$L(v,\{1,2,3\})=1+1=2$.  An additional path of length~2 can be packed
		into $T(v)$, because an additional path of length~1 can be packed
		into~$T(u)$.}
	\label{fig:tw2}
\end{figure}
If $v$ has one child $u$ then it is also quite easy to compute
$L(v,P)$:  If $L(u,P)=r$ with $r\geq0$, then clearly $L(v,P)=r+1$.
The right hand side of Figure~\ref{fig:tw2} shows an example.
The more
complicated possibility is $L(u,P)=-\infty$, which means that it is
completely impossible to pack all $P$-paths into $T(u)$.  It might
become possible to pack all $P$-paths into $T(v)$ by using the
additional edge~$uv$.  If this is possible, then one path, say the $i$th
one with length~$l_i$, uses the edge~$uv$.  Then all paths in $P-\{i\}$
are packed into $T(u)$ and one additional path of length $l_i-1$ that
ends in~$u$.  We can check this by verifying that $L(u,P-\{i\})\geq
l_i-1$ for some~$1\leq i\leq k$ (actually, $L(u,P-\{i\})\geq l_i$ is
impossible, because then all $P$-paths could be packed into $T(u)$ and
$L(u,P)\neq-\infty$).  If we find such an $i$, then all $P$-paths can be
packed into $T(v)$, but only by using the edge~$uv$.  This means
that no other path can be packed into $T(v)$ that ends in~$v$ and
therefore $L(v,P)=0$.  See the left hand side of Figure~\ref{fig:tw2} for
an example.
\[
L(v,P)=
\begin{cases}
  L(u,P)+1&\text{if } L(u,P)\geq0\\[5pt]
  0       &\text{if $L(u,P)=-\infty$ and $L(u,P-\{i\})=l_i-1$ for some
           $i\in P$}\\
  -\infty &\text{otherwise}
\end{cases}
\]

{\bf Two children. } 
Finally, we assume that $v$ has exactly two children $u_1,u_2$.
In that case we can construct for each of them a new tree by
attaching new roots $v_1,v_2$ to $T(u_1)$ and $T(u_2)$
and computing
the $L$-tables for both of them.  To compute the table of $v$ it is
sufficient to compute a table for a tree that we get by glueing two
trees together by identifying their roots.  We just have to 
glue $v_1$ to~$v_2$.

So we can assume that we have two trees with roots $v_1$ and $v_2$ and
a tree with root $v$ that we get by identifying $v_1$ and $v_2$ and
renaming it to~$v$.  This is often called a join operation.
We have the tables for $v_1$ and $v_2$ and want
to compute the table for~$v$.

Clearly, $L(v,P)=r$ with $r>0$  
iff some of the $P$-paths can be packed into $T(v_1)$ and
the others into $T(v_2)$ and the additional path with length~$r$
that ends in $v$ can be packed into $T(v_1)$ or $T(v_2)$.
The additional path of length $r$ that ends in $v$ prevents any
$P$-path from being packed partially into $T(v_1)$ and $T(v_2)$.
That is the case iff there is a bipartition of $P$ into $P_1$ and $P_2$
such that $L(v_1,P_1), L(v_2,P_2)\geq 0$ and
$\max\{L(v_1,P_1),L(v_2,P_2)\}=r$.
There are
$2^{|P|}$ many subsets of~$P$.  To check all
bipartitions for all these subsets $P\subseteq [k]$ means looking at
$
\sum_{i=0}^k {k\choose i}2^i=3^k
$
many cases.
Using fast subset convolution lets us speed up the
computation to $2^k k^{O(1)}$ steps:
Let 
\[
f_i(S)=
\begin{cases}
1 & \text{if } L(v_i,S)\geq0\\
0 & \text{otherwise}
\end{cases}
\qquad
g_i(S)=
\begin{cases}
1 & \text{if } L(v_i,S)\geq r\\
0 & \text{otherwise.}
\end{cases}
\]
Then $L(v,P)\geq r$ iff $(f_1*g_2)(P)+(g_1*f_2)(P)\geq1$.  

\begin{figure}
	\centerline{\includegraphics{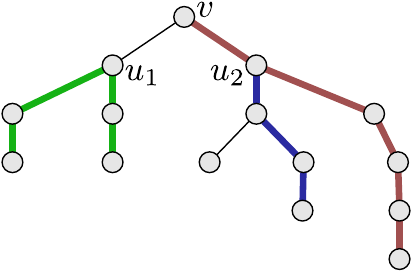}\hfil
		\includegraphics{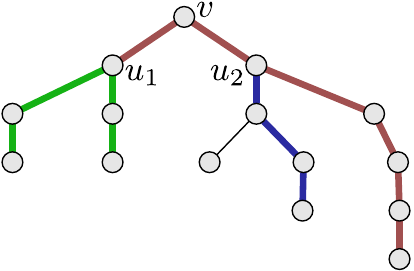}}
	\caption{Left side: Packing paths of lengths $l_1=5$, $l_2=4$, $l_3=3$ into
		$T(v)$.  $L(v_1,\{2\})=1$ and $L(v_2,\{1,3\})=0$ imply
		$L(v,\{1,2,3\})\geq1$.\\
		Right side:  Now $l_1=6$, $l_2=4$, $l_3=3$.
		$L(v_1,\{2\})+L(v_2,\{3\})=1+5=l_1$ implies 
		$L(v,\{1,2,3\})\geq0$.  This time we partition $\{1,2,3\}$ into three
		parts.  One goes to the left, one to the right, and one path uses both
		subtrees.}
	\label{fig:tw4}
\end{figure}

The situation is different if $L(v,P)=0$.
In that case both edges $u_1v$ and $u_2v$ have to be used when packing
all $P$-paths into $T(v)$ because otherwise at least a path of length
one that ends in $v$ could additionally be packed into~$T(v)$.

In such a packing one path, say the $i$th one with length~$l_i$,
uses $u_1v$ and~$u_2v$.  That is possible iff there is a bipartition of
$P-\{i\}$ into $P_1$ and $P_2$ such that $L(v_1,P_1)+L(v_2,P_2)\geq
l_i$.  Again, by using fast subset convolution we can check this in
$2^kk^{O(1)}$ steps.
\end{proof}

The above proof does not work any more if we glue together two trees whose
roots have degree higher than one.  For general trees the dynamic
programming is much more complicated and we will need more complicated tables.

Let $T(v)$ be a rooted tree with root $v$ and no restrictions on the
degree of vertices (and thus on the number of children).  Let us again
fix length $l_1,\ldots,l_k\in\N$ of paths that we are going to pack
into a tree.  We are going to identify a set of paths by a set
$P\subseteq[k]$.  We speak of $P$-paths as the paths with length $l_i$
for every $i\in P$.

\begin{definition}\rm
Let us fix $l_1,\ldots,l_k\in\N$, $P\subseteq[k]$, and $T$ be a rooted
tree.  $T(v)$ is the subtree of $T$ with root~$v$.
\begin{enumerate}
\item Let $M,M'\subseteq\N$ be two multisets.  We say that
$M'\succcurlyeq M$ if we can construct $M'$ from $M$ by adding numbers 
and increasing numbers that are already in~$M'$.

Let $M'\succ M$ iff $M'\neq M$ and $M'\succcurlyeq M$.

Example: $\{3,3,5,5,7\}\succcurlyeq\{2,3,4,6\}$,
but $\{3,3,5,5,7\}\not\succcurlyeq\{2,3,4,8\}$.
\item Let $\S$ be a set of multisets of natural numbers.  Then
\[
K(\S)= \{\,M\in\S\mid \text{ there is no }M'\in\S\text{ with }
M'\succ M\,\}.
\]

\item Then we define $\L(v,P)$ as a set of multisets of natural numbers as
follows:

Let $M\subseteq\N$ be a multiset of natural numbers.  Then
$M\in\L(v,P)$ iff it is possible to pack all $P$-paths into $T(v)$
such that we can pack additionally all non-empty paths into $T(v)$
that start at $v$ and have lengths given in~$M$ and if there is no $M'\in\L(v,P)$ with $M'\succcurlyeq M$.

Particularly, $\L(v,P)=\emptyset$ iff it is impossible to pack all
$P$-paths into $T(v)$ and $\L(v,P)=\{\emptyset\}$ iff it is possible
to pack all $P$-paths into $T(v)$, but there is no possibility to
additionally pack a non-empty path that starts at~$v$.

\item If $M\subseteq\N$ then $\max_q(M)$ is the multiset that consists
of the $q$ biggest elements in $M$ or of all of them if $M$ contains
less than $q$ numbers, e.g., $\max_3(\{5,5,4,4,3,2,1\})=\{5,5,4\}$.

%
\end{enumerate}
\end{definition}

\begin{figure}
	\centerline{
	\includegraphics[width=\textwidth]{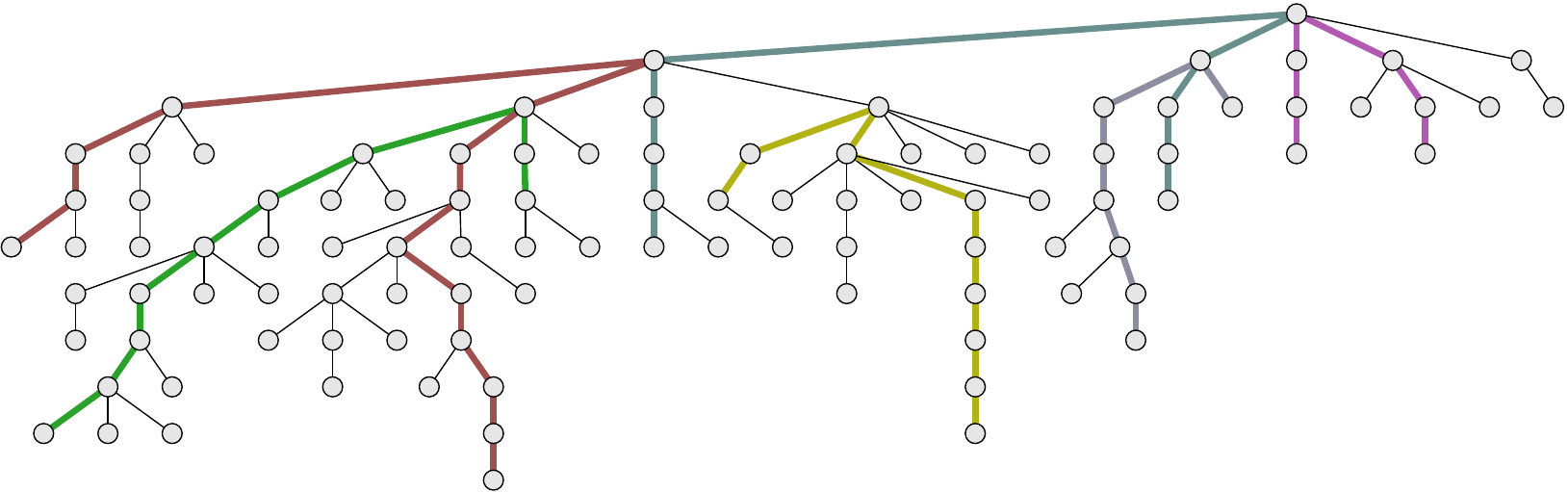}}
	\caption{In this tree there are nodes with more than two children and
paths can ``cross.''  We pack paths with lengths $13,9,9,9,7,6$.  There
is no solution if we replace 6 by~7.}
\label{fig:tw6}
\end{figure}

In the following let $l_1,\ldots,l_k$ be fixed.

\begin{lemma}
Let $T(v)$ be a rooted tree with root $v$ such that $v$ has one
child~$u$.
\begin{enumerate}
\item
Assume that $\L(u,P)=\{M_1,\ldots,M_m\}$ with $m\geq1$.
Define $\L_{\max}(u,P)=\max(M_1\cup\cdots\cup M_m)$ (where $\max\emptyset=0$).

Then $\L(v,P)= \{\{\L_{\max}(u,P)+1\}\}$.
\item
Assume that $\L(u,P)=\emptyset$ and there is an $i\in\{1,\ldots,k\}$ with
$\L_{\max}(u,P-\{i\})=l_i-1$.

Then $\L(v,P) = \{\emptyset\}$.
\item
Otherwise $\L(v,P) = \emptyset$.
\end{enumerate}
\end{lemma}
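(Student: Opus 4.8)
The plan is to exploit the single structural fact that, since $u$ is the only child of $v$, the root $v$ has exactly one incident edge in $T(v)$, namely $uv$. Consequently, in any edge-disjoint packing of paths into $T(v)$ the edge $uv$ is used by at most one path, and \emph{every} path meeting $v$ --- be it one of the $P$-paths or an additional path starting at $v$ --- must traverse $uv$. I would isolate this as the governing principle and read off the three cases from a single feasibility dichotomy: whether or not the $P$-paths already fit into $T(u)$, i.e.\ whether $\L(u,P)\neq\emptyset$.

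For the first branch, suppose $\L(u,P)\neq\emptyset$. Then I pack all $P$-paths inside $T(u)$ and leave $uv$ free; since any additional path starting at $v$ must run through $uv$ and continue as a path starting at $u$, and since only the one edge $uv$ leaves $v$, at most one additional path can start at $v$. Its length is $1$ plus the length of a path startable at $u$, whose maximum is exactly $\L_{\max}(u,P)$. Hence every feasible extra multiset at $v$ is a singleton dominated by $\{\L_{\max}(u,P)+1\}$, so $K$ leaves only $\{\{\L_{\max}(u,P)+1\}\}$, which is Case~1.

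For the second branch, suppose $\L(u,P)=\emptyset$, so some $P$-path must use $uv$. Such a path, say the $i$-th, starts at $v$, crosses $uv$, and continues as a path of length $l_i-1$ from $u$, while $P-\{i\}$ is packed into $T(u)$; this is possible exactly when $\L_{\max}(u,P-\{i\})\geq l_i-1$. The step that pins this to an \emph{equality} is the observation that $\L(u,P)=\emptyset$ forces $\L_{\max}(u,P-\{i\})\leq l_i-1$ for every $i$, for otherwise a path of length $l_i$ startable at $u$ inside $T(u)$ could serve as the $i$-th path and make all $P$-paths fit into $T(u)$, a contradiction. The two inequalities combine to $\L_{\max}(u,P-\{i\})=l_i-1$, the condition of Case~2; and since the $i$-th path then fully occupies $uv$, no further path can start at $v$, giving $\L(v,P)=\{\emptyset\}$. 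When no such $i$ exists we are in the residual Case~3, the $P$-paths cannot be packed into $T(v)$ at all, and $\L(v,P)=\emptyset$.

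The main obstacle I anticipate is the bookkeeping at the degenerate boundary values. I must apply the convention $\max\emptyset=0$ consistently, distinguishing $\L(u,P')=\{\emptyset\}$ (packable with no room to spare, so $\L_{\max}=0$) from $\L(u,P')=\emptyset$ (unpackable, where $\L_{\max}$ must be read as $-\infty$ so that the Case~2 equality cannot be triggered spuriously), and I must check the short paths with $l_i=1$, where the $i$-th path degenerates to the single edge $uv$. A second delicate point is verifying that the maximum over the \emph{union} $M_1\cup\cdots\cup M_m$ genuinely equals the longest single path startable at $u$: if some length $s$ is achievable then $\{s\}$ is feasible, so by maximality some $M_j\in\L(u,P)$ satisfies $M_j\succcurlyeq\{s\}$ and therefore contains an element $\geq s$, whence $\L_{\max}(u,P)\geq s$, and this bound is attained by a genuinely realizable packing.
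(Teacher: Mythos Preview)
Your proposal is correct and follows essentially the same approach as the paper: split on whether the $P$-paths already fit into $T(u)$, use the fact that $v$ has a unique incident edge $uv$ so that at most one path can touch~$v$, and in the infeasible branch derive the equality $\L_{\max}(u,P-\{i\})=l_i-1$ from the contradiction that a longer startable path at $u$ would make all of $P$ packable in $T(u)$. Your write-up is in fact more explicit than the paper's on the boundary bookkeeping (distinguishing $\L(u,P')=\{\emptyset\}$ from $\L(u,P')=\emptyset$, and verifying that the maximum over $\bigcup M_j$ is genuinely realized), which the paper leaves implicit.
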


\begin{proof}
We have to consider exactly two cases.  The first case is that it
is possible to
pack all $P$-paths into $T(u)$.  If this is the case, then an
additional path of length $r+1$ can be packed into $T(v)$ starting at
$v$ iff an additional path of length $r$ can be packed into $T(u)$
starting at~$u$.  The latter is the case iff $\L_{\max}(u,P)=r$.

The second case is that it is impossible to pack all $P$-paths into
$T(u)$ alone.  It might still be possible to pack them into $T(v)$,
but only if the edge $uv$ is used.  This means that there is only
space for an additional path of length zero that starts at~$v$.

In fact, exactly one path, say the $i$th one, uses the edge $uv$.
This is possible iff we can pack all $(P-\{i\})$-paths into $T(u)$ and
being able to additionally pack a path of length at least $l_i-1$ into
$T(u)$ starting at~$u$.  Actually, this path cannot be longer than
$l_i-1$ because then we would be able to pack all $P$-paths, which is
a contradiction.
\end{proof}

\begin{lemma}\label{lem:dpgc2}
Let $T(v_1)$ and $T(v_2)$ be two rooted trees with no common vertices,
such that $v_2$ has exactly one child.
Let $T(v)$ be the tree that we get by identifying $v_1$ with $v_2$ and
renaming it to~$v$. Then $\L(v,P)=K(\L_1\cup\L_2)$ where
\begin{align*}
\L_1&=\bigcup_{P_1\subseteq P\atop P_2=P-P_1}
      \bigcup_{M_1\in \L(v_1,P_1)\atop M_2\in\L(v_2,P_2)}
      \bigl\{M_1\cup M_2\bigr\}\\
\L_2&=\bigcup_{P_1\subseteq P\atop P_2=P-P_1}
      \bigcup_{i\in P}
      \bigcup_{M_1\in \L(v_1,P_1-\{i\})\atop M_2\in\L(v_2,P_2-\{i\})}
      \bigcup_{{r_1\in M_1\atop r_2\in M_2}\atop r_1+r_2\geq l_i}
        \bigl\{(M_1-\{r_1\})\cup(M_2-\{r_2\})\bigr\}
\end{align*}
\end{lemma}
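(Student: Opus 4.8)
The plan is to use that, after identifying $v_1$ with $v_2$, the new root $v$ is a cut vertex: $T(v_1)$ and $T(v_2)$ meet only in $v$, so every path packed into $T(v)$ either lies entirely inside one of the two subtrees or passes through $v$, in which case it splits into a sub-path starting at $v$ inside $T(v_1)$ and one starting at $v$ inside $T(v_2)$. The first thing I would record is the consequence of the hypothesis that $v_2$ has exactly one child: there is only a single edge of $T(v_2)$ incident to $v$, and since each crossing path must consume one such edge, \emph{at most one} packed $P$-path can cross through $v$. This dichotomy---either no $P$-path crosses, or exactly one does---is precisely what splits the answer into the two terms $\L_1$ and $\L_2$.

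For the no-crossing case I would argue that any packing factors through a bipartition $P = P_1 \cup P_2$ with $P_2 = P - P_1$: restricting the packing to $T(v_1)$ packs the $P_1$-paths together with those additional $v$-paths that went left, and symmetrically on the right. Since the two sides are edge-disjoint, the combined additional multiset is exactly the union of the two, giving $M_1 \cup M_2$ in one direction, while the converse simply overlays two independent packings; this yields $\L_1$. For the one-crossing case with crossing path $i$, the path splits into a prefix of length $a_1$ in $T(v_1)$ and a suffix of length $a_2 = l_i - a_1$ in $T(v_2)$, each of which must fit into an available slot. Such $a_1,a_2$ exist exactly when one can pick slots $r_1 \in M_1$ and $r_2 \in M_2$ with $r_1 + r_2 \geq l_i$ (this is the reason for "$\geq$" rather than "$=$"), where $M_1 \in \L(v_1,P_1-\{i\})$ and $M_2 \in \L(v_2,P_2-\{i\})$ because $i$ is removed from whichever side it was assigned to. I would emphasize that a used slot is consumed \emph{in full}: once the crossing path takes a prefix of it, the unused tail no longer starts at $v$ and is therefore worthless as an additional $v$-path. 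This is exactly why the leftover multiset is $(M_1-\{r_1\}) \cup (M_2-\{r_2\})$, with no credit for the unused length, and it gives $\L_2$.

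The step I expect to be the main obstacle is justifying that it suffices to combine only the $\succcurlyeq$-maximal multisets stored in $\L(v_1,\cdot)$ and $\L(v_2,\cdot)$, rather than \emph{all} achievable additional multisets on each side. For this I would prove a monotonicity statement: both operations $(M_1,M_2)\mapsto M_1\cup M_2$ and $(M_1,M_2)\mapsto (M_1-\{r_1\})\cup(M_2-\{r_2\})$ are monotone with respect to $\succcurlyeq$. Concretely, if $M_1' \succcurlyeq M_1$ then $M_1'\cup M_2 \succcurlyeq M_1 \cup M_2$, and in the $\L_2$ case one transports the chosen slot along the domination witness to a slot $r_1' \geq r_1$ of $M_1'$, so that the constraint $r_1'+r_2 \geq l_i$ is preserved and the leftover $(M_1'-\{r_1'\})\cup(M_2-\{r_2\})$ still dominates. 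Hence any achievable additional multiset of $T(v)$ is dominated by some element of $\L_1\cup\L_2$ built from the stored antichains, while conversely every element of $\L_1\cup\L_2$ is genuinely achievable. Taking $K$ of the union then extracts exactly the $\succcurlyeq$-maximal achievable multisets, which by definition is $\L(v,P)$, completing the proof.
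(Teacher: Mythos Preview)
Your proposal is correct and follows essentially the same approach as the paper: both argue the two inclusions via the dichotomy ``no $P$-path crosses $v$'' versus ``exactly one $P$-path crosses $v$'' (the latter forced by $v_2$ having a single child), yielding $\L_1$ and $\L_2$ respectively. Your treatment is in fact more careful than the paper's on the monotonicity point---the paper simply writes ``it is easy to see that $M\in\L_1$'' and similarly for $\L_2$, whereas you explicitly justify why it suffices to combine only the $\succcurlyeq$-maximal multisets stored on each side.
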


\begin{proof}
``$\L(v,P)\supseteq K(\L_1\cup\L_2)$'':
If $M\in \L_1\cup\L_2$ then $M\in\L_1$ or $M\in\L_2$.  Let us first
consider the case $M\in\L_1$.  By the definition of $\L_1$ there
are $P_1\subseteq P$, $P_2=P-P_1$, $M_1\in\L(v_1,P_1)$, and
$M_2\in\L(v_2,P_2)$ such that $M=M_1\cup M_2$.  By induction we know
that $P_1$ can be packed into $T(v_1)$ as well as additional paths of
lengths $M_1$ starting at $v_1$.  The same holds for $P_2$, $v_2$, and
$M_2$.  Using this packing we actually packed $P$ into $T(v)$ and
additional paths of lengths $M_1\cup M_2=M$ starting at~$v$.
By definition then $M\in\L(v,P)$.

The other possibility is $M\in\L_2$, which is a bit more complicated.
If $M\in\L_2$, then $M=(M_1-\{r_1\})\cup(M_2-\{r_2\})$, where
$r_1\in M_1$, $r_2\in M_2$, $r_1+r_2\geq l_i$,
$M_1\in\L(v_1,P_1-\{i\})$,
$M_2\in\L(v_2,P_2-\{i\})$,
$P_1\subseteq P$, $P_2=P-P_1$, and $i\in P$.

We have to show that it is possible to pack $P$ into $T(v)$ and
additionally paths with lengths from $M$ starting at~$v$.
By induction we know that we can pack all $(P_1-\{i\})$-paths into
$T(v_1)$ and all $(P_2-\{i\})$-paths into~$T(v_2)$.  Simultaneously,
we can pack additional paths with lengths from $M_1$ into $T(v_1)$
starting at~$v_1$ and paths with lengths from $M_2$ into $T(v_2)$
starting at~$v_2$.  Hence, we can pack paths with lengths $M=
(M_1-\{r_1\})\cup(M_2-\{r_2\})$ into $T(v)$ leaving space for a path
of length $r_1$ in $T(v_1)$ and a path of length $r_2$ in $T(v_2)$.
We can combine these two paths into one path of length $r_1+r_2\geq
l_i$ and pack one additional path of length~$l_i$ into $T(v)$.
Altogether we packed $P_1$, $P_2$, $\{i\}$ and therefore all $P$-paths
into $T(v)$.

``$\L(v,P)\subseteq K(\L_1\cup\L_2)$'':
Let $M\in\L(v,P)$.  Then $P$ can be packed into~$T(v)$.  There are two
possibilities:

1. No path corresponding to $i\in P$ lies partially in $T(v_1)$ and
partially in $T(v_2)$.  Then we can split $P=P_1\cup P_2$ such that
$P_1$-paths are packed into $T(v_1)$ and $P_2$-paths into $T(v_2)$.
The additional path with lengths from $M$ are also packed into
$T(v_1)$ and $T(v_2)$.  Let us say $M=M_1\cup M_2$, where $M_1$ is in
$T(v_1)$ and $M_2$ in $T(v_2)$.  Then it is easy to see that
$M\in\L_1$.

2. There is an $i\in P$ such that all $(P-\{i\})$-paths are packed
into $T(v_1)$ and $T(v_2)$, but exactly one path with length $l_i$
is packed into $T(v)$ using edges from both $T(v_1)$ and~$T(v_2)$.
Note that there can be at most one such path because $v_2$ has only
one child in~$T(v_2)$.  Then all additional paths with lengths in $M$
that start at $v$ have to be packed into $T(v_1)$ alone because the
edge in $T(v_2)$ is not available any more.  Let $r_1$ be the length
of the part of the bridging path of length~$l_i$ that lies in $T(v_1)$
and $r_2$ the length of the part in~$T(v_2)$.  Clearly, $r_1+r_2=l_i$.
With all these facts we can again easily verify that $M\in\L_2$.
\end{proof}

%

The following lemma shows that the size of the tables is bounded by a
function in $k$ and the maximal degree.  The estimate is quite
pessimistic, but we are not trying to optimize the runtime of the
dynamic programming algorithm at the moment and are content with
proving fixed parameter tractability.

\begin{lemma}\label{lem:Lsmall}
Let $T(v)$ be a rooted tree and assume that vertex $v$ has $d$ children. 
Then
\[
{|\L(v,P)|\leq d2^{kd}}.
\]
\end{lemma}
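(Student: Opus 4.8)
The plan is to bound $|\L(v,P)|$ in two independent stages: first control the \emph{size} of each multiset in the table, and then control the \emph{number} of distinct maximal multisets. For the first stage I would show that every $M \in \L(v,P)$ satisfies $|M| \le d$. By the definition of $\L(v,P)$, the entries of $M$ are the lengths of additional non-empty paths that can be packed edge-disjointly into $T(v)$ while all of them start at the root $v$. Since $v$ is the root of the subtree $T(v)$, its only neighbours are its $d$ children, so $v$ has degree exactly $d$ in $T(v)$. Each additional path leaves $v$ through its incident edge, these first edges must be pairwise distinct by edge-disjointness, and hence at most $d$ such paths can start at $v$. This already gives $|M| \le d$ for every $M$ in the table.

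For the second stage the idea is to attach to each $M \in \L(v,P)$ a cheap combinatorial signature and to show that $M$ is essentially recoverable from it. Fixing one witnessing packing per $M$, I would record, for every child $u_j$, the set $A_j \subseteq P$ of indices whose $P$-path uses at least one edge inside the subtree $T(u_j)$. This produces a tuple $(A_1,\dots,A_d) \in (2^{P})^{d}$, of which there are at most $(2^{|P|})^{d} \le 2^{kd}$. If I can argue that at most $d$ maximal multisets share a common signature, the two stages combine to $|\L(v,P)| \le d\,2^{kd}$, as claimed.

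The main obstacle is precisely this recovery step. For a child $u_j$ whose incident edge $vu_j$ is \emph{free} in the fixed packing, no $P$-path touches $T(u_j)$ except those confined below $u_j$, so the longest additional path that may run from $v$ into $T(u_j)$ is forced to have length $1 + \L_{\max}(u_j, A_j)$; this value is determined by the signature, and here the maximality built into $K(\cdot)$ enters, since a maximal $M$ must realise each free length as large as the signature permits. The delicate point is deciding \emph{which} incident edges $vu_j$ are free: a $P$-path may cross $v$ from one branch into another, or it may merely end at $v$, and in either case it occupies an incident edge and blocks that branch's free path without this being fully visible in a single set $A_j$. Here I would appeal to the first stage again: at most $d$ paths are incident to $v$, so the number of ways these few paths can occupy the edges $vu_1,\dots,vu_d$, and hence the number of maximal multisets compatible with one fixed signature, is at most $d$ --- the loose factor appearing in the statement. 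As the estimate is deliberately pessimistic and only needs to certify fixed-parameter tractability, I would not try to tighten these constants and would settle for $|\L(v,P)| \le d\,2^{kd}$.
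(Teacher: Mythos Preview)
Your first stage --- $|M|\le d$ because the additional root-starting paths leave $v$ through pairwise distinct incident edges and $v$ has only $d$ children --- is exactly what the paper uses.

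The second stage, however, has two genuine gaps. First, your signature $A_j$ (indices of $P$-paths using an edge \emph{inside} $T(u_j)$) misses paths that use only root-incident edges: a $P$-path embedded as $u_j\,v\,u_{j'}$ touches no edge of $T(u_j)$ or of $T(u_{j'})$, yet it blocks both branches, so the signature cannot tell you which edges at $v$ are free. Second, and more seriously, the step ``at most $d$ maximal multisets share a signature, because at most $d$ paths are incident to $v$'' is a non sequitur. The number of paths incident to $v$ does not bound the number of ways these paths can occupy the edges $vu_1,\dots,vu_d$; the number of possible free/blocked patterns alone is $2^d$, not $d$. Nothing you wrote rules out many incomparable maximal $M$'s arising from one signature with different free sets.

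The paper sidesteps all of this with a much more direct count. Once $|M|\le d$ is known, it simply observes that each number occurring in $M$ is of the form $1+\L_{\max}(u_j,P')$ for some child $u_j$ and some $P'\subseteq P$, so each of the at most $d$ entries is drawn from a pool of at most $2^k$ values. Counting such multisets crudely gives at most $d\,(2^k)^d=d\,2^{kd}$, with the extra factor $d$ absorbing the variable size of $M$; the paper explicitly says the estimate is pessimistic, since any bound depending only on $k$ and $d$ suffices for the FPT result. No signature and no reconstruction of the underlying packing are needed --- one just counts the multisets themselves.
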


\begin{proof}
If $v$ has only one child, then $|\L(v,P)|=1$ and the statement is
true.  Assume next that $T(v)$ has $d$ children.  Each subtree can
receive at most $2^k$ different sets of packed paths yielding at most
$2^k$ different length of the longest path that can be additionally
packed.  Therefore a set $M\in\L(v,P)$ can have size at most $d$ and
contain up to $d$ numbers each chosen from a set of size at most
$2^k$.  In total that are at most $d2^{kd}$ possibilities for a
set~$M$.
\end{proof}

\begin{theorem}\label{thm:3P}
Let $T$ be a rooted tree and $P$ a multiset of paths.  In
polynomial time a rooted tree $T'$ can be computed that has the
following properties:\\
(1)~$P$ can be packed into $T$ iff it can be packed into~$T'$,\\
(2)~each node in $T'$ has at most $3|P|$ children, and (3)~$T'$ is a subtree of~$T$.
\end{theorem}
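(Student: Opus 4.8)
The plan is to build $T'$ from $T$ by repeatedly deleting an entire child subtree at some node that currently has more than $3|P|$ children, until every node has at most $3|P|$ children. Write $k=|P|$. Since every intermediate graph is obtained from $T$ by deleting vertices together with their descendants, the final $T'$ is automatically a subtree of $T$, giving property~(3); and since each deletion removes at least one vertex, at most $n$ deletions occur, so as long as each deletion step runs in polynomial time the whole construction does too. For property~(1), note first that $T'\subseteq T$ gives the trivial direction: any packing of $P$ into $T'$ is also a packing into $T$. Conversely, deleting vertices can never create a packing, so if $P$ does not pack into $T$ it packs into no subtree either; hence it suffices to guarantee that each single deletion is \emph{safe}, meaning it preserves the property ``$P$ packs into $T$''. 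Everything therefore reduces to: at a node with more than $3k$ children, identify in polynomial time one child whose deletion is safe.

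The core is a counting lemma showing that such a child always exists. Fix any packing of $P$ (if none exists there is nothing to preserve) and consider a node $v$. Call a child $c$ of $v$ \emph{used} if the packing places at least one path edge inside the subtree $T(c)$ or on the edge $vc$. Split the used children into two types. A child is of \emph{type E} if the edge $vc$ carries a path. Since the graph is a tree, each of the $k$ paths is simple and therefore meets $v$ in at most one subpath, using at most two edges incident to $v$; hence at most $2k$ edges $vc$ are used and there are at most $2k$ type-E children. A child is of \emph{type F} if $vc$ is unused but some path lies entirely inside $T(c)$. Distinct type-F children host pairwise disjoint, non-empty sets of fully contained paths, and there are only $k$ paths in total, so there are at most $k$ type-F children. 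Consequently any packing uses at most $2k+k=3k$ children of $v$, and whenever $v$ has more than $3k$ children at least one child is unused; deleting that child's subtree leaves the fixed packing intact, so a safe deletion exists.

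The main obstacle is to turn this existence statement into a \emph{deterministic, polynomial-time} selection, because we may not assume a packing is known: \PP on general trees is only shown to be \fpt, and, by \cref{lem:Lsmall}, the very tables one would use to decide packability blow up to size $d\,2^{kd}$ at a node of degree $d$, so the degree reduction must precede, not follow, the dynamic program. The plan is therefore to certify safety by cheap local quantities only: for each child $c$ one computes, in polynomial time by longest-path computations in the tree, the \emph{stub capacity} $r_c$ (the length of a longest path in $T(c)$ that starts at $c$) and the \emph{internal capacity} $D_c$ (the length of a longest path contained in $T(c)$), the former controlling the role of $c$ as a type-E child and the latter its role as a host. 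One then argues by an exchange argument that, among more than $3k$ children, some child $c_j$ is redundant: its type-E contribution can be rerouted to one of the $2k$ children of largest stub capacity and its hosting contribution to children of large internal capacity, none of which is fully occupied because the counting lemma caps the demand at $3k$.

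The technical heart of this exchange, and the step I expect to be hardest, is that a single child may simultaneously carry a stub and host several further paths, so a substitute must absorb the whole \emph{block} of $c_j$'s usage rather than a single task; to keep the bound at $3k$ one cannot simply retain one representative per capability class, since there may be exponentially many pairwise incomparable classes. The resolution I would pursue is a Hall- or flow-type rerouting in which the at most $k$ fully internal paths are re-hosted one per kept child and the at most $2k$ stub demands are re-routed to kept children of sufficient stub capacity, exploiting that multi-hosting children (which are precisely the dangerous ones to delete) have large capacity and are therefore themselves kept. Establishing the Hall condition for this rerouting, and thereby confirming that the retained family of at most $3k$ children supports a packing whenever the original tree does, is the crux of the argument.
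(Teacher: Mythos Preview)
Your proposal shares the paper's key ingredients: the same counting lemma (at most $2k$ children where the edge to $v$ is used, at most $k$ further children hosting a full path, hence at most $3k$ used children) and the same two local capacities (your $r_c$ and $D_c$ are exactly the paper's $l_1$ and $l_2$). Where you diverge is in framing the pruning as an iterative search for a single safe deletion and then escalating to a Hall-type argument; the paper's route is simpler and sidesteps the difficulty you flag as ``the crux''.

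Rather than deleting one child at a time and trying to certify that this particular child is safe, the paper fixes in advance \emph{which} children to keep at every vertex: the $|P|$ children with largest $l_1$ together with the $2|P|$ children with largest $l_2$ (this is a deterministic, polynomial-time choice). It then shows by a direct repacking argument that any packing of $P$ into $T$ can be modified into one that touches only kept subtrees. The repacking handles one path-piece at a time: if some path $p$ uses a discarded subtree $T(v_i)$, then either $p$ meets $v_i$ (so the piece is a stub of length at most $l_1(v_i)$) or $p$ lies strictly below $v_i$ (so $|p|\le l_2(v_i)$); since $v_i$ was not selected, every kept child of the relevant type has at least as large a capacity, and by pigeonhole one of them is currently unused and can absorb this single piece.

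Your ``block'' worry is therefore a red herring. You never need a single substitute to absorb everything that lived in a deleted child: the stub can go to one kept child and each hosted path to another. Once you scatter in this way, the total demand on kept children is exactly the at-most-$3k$ path-pieces counted by your lemma, matched against $3k$ kept children whose capacities dominate those of every discarded child; no Hall condition or flow argument is required. In short, you identified the right invariants but chose the harder direction---looking for one child to drop rather than fixing the set to keep---and this is what generated the apparent obstacle.
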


\begin{proof}
Let $l_1(u)$ be the length of the longest path in $T(u)$ that starts
in~$u$ and $l_2(u)$ be the length of the longest path in $T(u)$.
Assume that $P$ can be packed into~$T$ and $v$ be an arbitrary vertex
in~$T$.  Let us fix an edge-disjoint packing of~$P$. 

Let $v$ be an arbitrary node in~$T$ and $v_1,\ldots,v_m$ the children
of~$v$.  Let us further assume that $v_1,\ldots,v_{3|P|}$ contain
the $|P|$ children with biggest $l_1(v_i)$ and $2|P|$ children with
biggest $l_2(v_i)$.  Ties can be arbitrarily ordered.

If $m\leq 3|P|$ we do nothing.  Otherwise assume that $P$ is packed into
$T$ and some path $p\in P$ uses $T(v_i)$ with $i>3|P|$.  There are two
possibilities:

(i)~$p$ contains~$v_i$.  Then $p$ is possibly packed
partially inside $T(v_i)$ and partially outside.  Let $p'$ be the part
of $p$ inside~$T(v_i)$.  Clearly, $p'$ starts at $v_i$. 
By the pigeonhole principle there must be some $T(v_k)$ that has not
been used in the packing of~$P$, $l_1(v_k)\geq l_1(v_i)$,
and $k\leq3|P|$.  Then we can repack
$p$ such that it uses $T(v_k)$ instead of~$T(v_i)$.

(ii)~$p$ does not contain~$v_i$ and is therefore completely packed
into $T(v_i)$.  Again by the pigeonhole principle we can find an
appropriate $T(v_k)$ with $k\leq 3|P|$ and $l_2(v_k)\geq l_2(v_i)$.
We can repack $p$ from $T(v_i)$ into~$T(v_k)$.

Repeated repacking in these two ways leads to a packing that uses
only the subtrees $T(v_1),\ldots,T(v_{3|P|})$.
We can therefore remove all other subtrees without changing a
yes-instance into a no-instance.  Applying this pruning to all
vertices in~$T$ leads to a new tree $T'$ that has all properties
stated in the theorem.  It is also clear that $T'$ can be computed in
polynomial time as it is easy to find longest paths in trees.
\end{proof}

%
%

Combining the above results (with the base case for a leaf $v$:
$\L(v,P)=\{\{0\}\}$ if $P$ contains only empty paths and 
$\L(v,P)=\emptyset$ otherwise) we can prove the following:

\begin{theorem}\label{the:pp-fpt-tree}
\PP into forests parameterized by the number of paths is in \fpt.
\end{theorem}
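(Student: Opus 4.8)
The plan is to assemble the machinery developed above into a bottom-up dynamic program whose tables have size bounded purely in terms of $k$, the number of paths. The only obstacle to an FPT running time is the unbounded degree of a tree, which is exactly what \cref{thm:3P} removes: first I would apply it to replace the input tree $T$ by the subtree $T'$ in which every node has at most $3k$ children, preserving packability of the full set by property~(1). After this preprocessing the number of children $d$ of any vertex is bounded by $3k$, so by \cref{lem:Lsmall} every table satisfies $|\L(v,P)|\leq 3k\cdot 2^{3k^2}$, a function of $k$ alone.

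With degrees bounded I would root $T'$ and compute $\L(v,P)$ for all $v\in V(T')$ and all $P\subseteq[k]$ in postorder. The leaf case is the stated base case ($\L(v,P)=\{\{0\}\}$ if all $P$-paths are empty and $\emptyset$ otherwise). For an internal vertex $v$ with children $u_1,\dots,u_d$ I would process the children one at a time: introduce auxiliary roots $v^{(1)},\dots,v^{(d)}$, where $v^{(j)}$ carries the first $j$ children, obtain $\L(v^{(1)},\cdot)$ from $\L(u_1,\cdot)$ by the one-child lemma above, and pass from $\L(v^{(j)},\cdot)$ to $\L(v^{(j+1)},\cdot)$ by a single application of the join \cref{lem:dpgc2}, taking $T(v_1)=T(v^{(j)})$ and $T(v_2)$ the one-child tree over $u_{j+1}$. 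Setting $\L(v,\cdot)=\L(v^{(d)},\cdot)$ completes the step, and $(T',P)$ is a yes-instance iff $\L(r,[k])\neq\emptyset$ at the root $r$; correctness follows from the two combination lemmas applied to $T'$ together with property~(1) of \cref{thm:3P}.

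For the running time I would observe that each of the $O(n)$ combination steps merges two tables of size bounded in $k$: iterating over all bipartitions $P_1\cup P_2=P$, all indices $i\in P$, and all pairs of slots in the two multisets costs only a function of $k$ per output subset, and taking the $K(\cdot)$-closure of the resulting family keeps each table minimal and hence again bounded by \cref{lem:Lsmall}. Thus the per-tree dynamic program runs in time $f(k)\,n^{O(1)}$.

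The remaining point, and the part I expect to require the most care, is lifting the single-tree algorithm to forests, since the paths must additionally be distributed among the components. After computing for each tree $T_i$ the family $R_i=\{\,P\subseteq[k]\mid \L(r_i,P)\neq\emptyset\,\}$, I would combine the components by a subset-partition step: maintain the collection of subsets of $[k]$ coverable by the components seen so far and extend it one component at a time, adding $A\cup B$ whenever $A$ is already coverable, $B\in R_i$, and $A\cap B=\emptyset$. This OR-convolution over $2^{[k]}$ costs $2^k k^{O(1)}$ per component and can be sped up via \cref{prop:fast-subset-convolution}, so the forest is a yes-instance iff $[k]$ is coverable after all components are processed, which keeps the total running time fixed-parameter tractable in $k$.
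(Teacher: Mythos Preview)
Your proposal is correct and follows essentially the same approach as the paper: apply \cref{thm:3P} to bound degrees, then run the $\L$-table dynamic program using \cref{lem:dpgc2} with table sizes controlled by \cref{lem:Lsmall}. In fact you supply more detail than the paper's three-line proof, in particular spelling out the auxiliary-root scheme for joining children one by one and treating the forest case explicitly via an OR-convolution over components---the paper's proof only mentions trees and leaves forests implicit.
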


\begin{proof}
Given a tree $T$ compute a rooted tree $T'$ where each node has at
most $3k$ children and every $P$ (with $|P|=k$) can be packed into
$T$ iff it can be packed into~$T'$ (Theorem~\ref{thm:3P}).
Then use dynamic programming 
to find out whether the paths can be packed into $T'$.  By
Lemma~\ref{lem:Lsmall} and~\ref{lem:dpgc2} this only takes
time $f(k){|T|^{O(1)}}$ for some function~$f$.
\end{proof}

\paragraph*{Lower bound}

While \PP on graphs with treewidth one is in FPT when parameterized by the number of paths,
we now show that the problem becomes hard on graphs with treewidth two.
As an intermediate step, we reduce from \textsc{Unary Bin Packing}~\cite{jansen2013bin} to show hardness of \mwnp.
This then leads to hardness results for \EPC. 
Remember that for \mwnp the numbers are unary encoded.

\begin{lemma}\label{lem:multiwayhard}
	\mwnp parameterized by the number of sets $k$ is $W[1]$-hard. 
    Moreover, unless ETH fails there is no algorithm that solves the problem in $f(k)N^{o(k/\log k)}$ time for some function $f$ where $N$ is the input size.
\end{lemma}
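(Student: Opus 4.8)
The plan is to reduce from \textsc{Unary Bin Packing}, which asks whether $n$ items of sizes $s_1,\dots,s_n$ (given in unary) can be distributed into $k$ bins each of capacity at most $B$. Jansen et al.~\cite{jansen2013bin} proved that this problem, parameterized by the number of bins $k$, is $W[1]$-hard and admits no $f(k)N^{o(k/\log k)}$ algorithm under ETH. Since the number of bins will become the number of sets, a parameter-preserving polynomial reduction transfers both statements to \mwnp.

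The key observation is that \mwnp is the ``tight'' version of bin packing, where every bin must be filled to exactly the average $\frac{1}{k}\sum_i w_i$, whereas bin packing only forbids exceeding the capacity. I would bridge this gap by padding. Given a \textsc{Unary Bin Packing} instance I first reject immediately if $\sum_i s_i > kB$, as then no packing can exist. Otherwise I build the \mwnp instance consisting of the original weights $s_1,\dots,s_n$ together with $D \coloneqq kB - \sum_i s_i$ additional unit weights (each equal to $1$), keeping the same $k$. By construction the total weight is exactly $kB$, so the target each set must reach is $\frac{1}{k}\sum_i w_i = B$.

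For correctness I would argue both directions. If the items fit into $k$ bins of capacity at most $B$, then since the total is $kB$ and no bin exceeds $B$, the unused capacity summed over all bins equals exactly $D$; distributing the $D$ unit weights into those slacks brings every bin to total exactly $B$, yielding a valid \mwnp partition. Conversely, a partition of the padded instance into $k$ sets of value exactly $B$ restricts, after deleting the unit dummy weights, to an assignment of the original items to $k$ bins in which each bin has total at most $B$, i.e.\ a valid bin packing.

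Finally I would verify the two complexity properties. The parameter $k$ is literally unchanged, so $W[1]$-hardness transfers directly. For the ETH bound, note that with unary encoding $B \le N$ and $k \le N$, so $D \le kB = O(N^2)$ and the constructed instance has size $N' = O(N^2)$; hence the reduction is polynomial, and an $f(k)(N')^{o(k/\log k)}$ algorithm for \mwnp would give an $f(k)N^{o(k/\log k)}$ algorithm for \textsc{Unary Bin Packing}, contradicting ETH. The only genuinely delicate point is the padding argument: I must ensure the dummy weights can always be absorbed into the per-bin slack without overshooting $B$, which is exactly why each dummy has size $1$ — with larger dummy weights this absorption could fail, so using unit weights is essential.
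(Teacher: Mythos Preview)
Your proposal is correct and follows essentially the same route as the paper: both reduce from \textsc{Unary Bin Packing}, pad with $D=kB-\sum_i s_i$ unit weights, and keep $k$ unchanged. The one difference is in the size accounting: you allow a quadratic blowup $N'=O(N^2)$ and (correctly) observe that $(N')^{o(k/\log k)}=N^{o(k/\log k)}$, whereas the paper first disposes of the case $D\ge\sum_i s_i$ as a trivial yes-instance via the folklore $2$-approximation, so that at most $\sum_i s_i$ unit weights are added and the new instance has size at most $2N$. Either bookkeeping suffices; the paper's extra case just keeps the blowup linear.
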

\short{}{
\begin{proof}
	We give an \fpt-reduction from \textsc{Unary Bin Packing} with the number of bins \( k \) as parameter.
	The input to this problem is a a list of weights $w_1,\dots,w_n$ encoded in unary and $k \in \N $ bins, each with capacity $b$. 
    The task is to decide if the weights can be packed into the $k$ bins.
	This problem is \( W[1] \)-hard, and cannot be solved in time $f(k)N^{o(k/\log k)}$ where \( N \) is the input size unless ETH fails \cite{jansen2013bin}.
	
	Given an instance of \textsc{Unary Bin Packing}, we construct an instance of \mwnp as follows.
    If the sum of weights \( \sum_{i=1}^n w_i \) exceeds the total capacity, \( b k \), return a trivial no-instance.
    Otherwise, let \(bk -  \sum_{i=1}^n w_i = d \geq 0 \) be the remaining space after all items are packed (assuming this is possible).
    If \( d \geq \sum_{i=1}^n w_i \), note that a folklore 2-approximation of bin packing guarantees a solution.
    In this case, we return a trivial yes-instance.

    Otherwise, we start constructing the following \mwnp instance:
    Additionally to the old weights, we add add \( d \) many `dummy' elements with weight one.
	The number of partitions \( k \) is the number of bins.
    Doing so increases the input size at most by a factor of two, since we add at most \( \sum_{i=1}^n w_i \) weights in unary.
	
	We claim that the constructed \mwnp-instance has a solution if and only if the original \textsc{Unary Bin Packing} has a solution.
	A solution of \mwnp partitions the weights and `dummy' weights into multi-sets of sum \( bk/k = b \).
	Thus there is a bin packing of the weights (without `dummy' weights) into \( b \) bins of size \( k \).
	For the other direction, a solution to the bin packing gives a partition into \( k \) sets, each of sum \( \leq b \), and total sum \( bk-d \).
	Thus, filling the bins with the \( d \) `dummy' weights yields a solution for the \mwnp-instance.
	
	The run time is polynomial in the input size \( N \) and the parameter \( k \) remains equal.
	Since \textsc{Unary Bin Packing} is \( W[1] \)-hard and has an ETH based lower bound of $f(k)N^{o(k/\log k)}$, these hardness results translate to \mwnp.
\end{proof}
}


\begin{theorem}\label{the:pc-w1hard-bounded-treewidth}
	The \EPC problem parameterized by the number of paths on graphs with treewidth two is $W[1]$-hard.
    Moreover, unless ETH fails there is no algorithm that solves the
    problem in $f(k) \; n^{o(k/\log k)}$ time for some function $f$ where $k$ is the
    number of paths and $n$ the number of vertices in the input graph.
\end{theorem}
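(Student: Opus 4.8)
The plan is to give a parameter-preserving reduction from \mwnp, using its $W[1]$-hardness and tight ETH lower bound (\cref{lem:multiwayhard}) with the number of sets as parameter. Given weights $w_1,\dots,w_n$ and $k$ sets with $W=\sum_i w_i$ and target $B=W/k$ (a trivial no-instance when $k\nmid W$), I would build a ``chain of theta-graphs''. For each weight $w_i$ introduce two terminals $a_i,b_i$ joined by $k$ internally disjoint arcs (call this gadget $\Theta_i$): one \emph{heavy} arc of length $w_i+2$ and $k-1$ \emph{light} arcs of length $2$ (length two avoids parallel edges, keeping the graph simple). Then identify $b_i$ with $a_{i+1}$, writing $v_i:=b_i=a_{i+1}$, so the gadgets form a series chain from $a_1$ to $b_n$. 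The list $P$ consists of exactly $k$ paths, each of length $2n+B$. Each $\Theta_i$ is series-parallel, and a series composition of series-parallel graphs is again series-parallel, so the whole graph has treewidth exactly two (it contains cycles, so the treewidth is not one). The construction is polynomial since the weights are unary, and it leaves the parameter $k$ unchanged.

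One direction is immediate. From a balanced partition $S_1,\dots,S_k$ I route the $j$-th path as a single thread from $a_1$ to $b_n$ that, in each gadget $\Theta_i$, uses the heavy arc exactly when $i\in S_j$ and a light arc otherwise, picking the per-gadget assignment so that the $k$ threads occupy the $k$ arcs of each gadget bijectively. Thread $j$ then has length $2n+\sum_{i\in S_j}w_i=2n+B$, and the threads cover every edge exactly once.

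The substance of the argument, and the step I expect to be the main obstacle, is the converse: every exact edge-disjoint packing of $P$ must already have this threaded form, so that it induces a genuine partition. I would establish this by an endpoint-counting argument. The $k$ paths provide only $2k$ path-endpoints in total. No path can pass \emph{through} $a_1$, since its two incident edges are the first edges of two arcs of $\Theta_1$, both leading to $b_1$, which would then be visited twice; hence each of the $k$ edges at $a_1$ is the end of a distinct path, so $a_1$ carries exactly $k$ endpoints, and symmetrically so does $b_n$. This already exhausts the budget of $2k$ endpoints, so no interior vertex is an endpoint. Consequently the degree-two interior vertices of every arc are merely passed through, so each arc is traversed as a whole by a single path, and at every degree-$2k$ connector $v_i$ all $k$ paths pass through. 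Simplicity forbids a path from entering and leaving $v_i$ on the same side (it would revisit $a_i$ or $b_{i+1}$), so each passing path continues from a $\Theta_i$-arc into a $\Theta_{i+1}$-arc. Thus the packing consists of exactly $k$ threads from $a_1$ to $b_n$, each choosing one arc per gadget, and in every gadget the heavy arc is used by exactly one thread; setting $S_j=\{\,i : \text{thread }j\text{ uses the heavy arc of }\Theta_i\,\}$ yields a partition whose classes each sum to $B$.

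Finally, since the reduction is polynomial, preserves the parameter, and produces $n'=\mathrm{poly}(N)$ vertices, $W[1]$-hardness transfers directly, and an $f(k)\,n'^{o(k/\log k)}$ algorithm for \EPC would give an $f(k)\,N^{o(k/\log k)}$ algorithm for \mwnp, contradicting \cref{lem:multiwayhard}.
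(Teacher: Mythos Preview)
Your overall strategy---reduce from \mwnp via a chain of theta-gadgets---matches the paper's, but your construction omits a crucial detail and the reduction as written is incorrect.

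The flaw is in the claim that no path can pass through $a_1$. You argue that any two edges at $a_1$ lie on arcs leading to $b_1$, so a path using both would visit $b_1$ twice; but that is only true if the path continues along both arcs all the way to $b_1$. It can instead terminate at an interior vertex of an arc. Concretely, take $n=1$, $k=2$, $w_1=2$, so $B=1$ and the \mwnp instance is a no-instance. Your graph is then a single $\Theta_1$ with a light arc $a_1,m,b_1$ and a heavy arc $a_1,h_1,h_2,h_3,b_1$, and you ask for two paths of length $2n+B=3$. But the paths $a_1,m,b_1,h_3$ and $a_1,h_1,h_2,h_3$ form an exact cover, so your reduction outputs a yes-instance. (If a single gadget feels degenerate, the same phenomenon occurs for $n=2$, $k=2$, $w_1=1$, $w_2=3$: one path can turn around at $a_1$ and end at the mid-vertex of a light arc.) Thus the endpoint-counting step fails, and with it the converse direction.

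The paper's construction avoids this by attaching $k$ private leaves to $a_1$ and $k$ private leaves to $b_n$ (and lengthening each path by $2$). These $2k$ degree-one vertices are forced to be path endpoints, which genuinely exhausts the endpoint budget; after that your subsequent argument---interior degree-two vertices only passed through, each arc traversed whole, no doubling back at a connector---goes through verbatim. With that single amendment your proof becomes correct and is essentially identical to the paper's.
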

\short{}{
\begin{proof}
	We give an \fpt reduction from \mwnp, parameterized by the number of sets. 
    Assume we want to partition weights $w_1,\dots,w_b$ into $k$ sets.
    By \cref{lem:multiwayhard} it is sufficient to reduce such an \mwnp-instance
    to an instance of the \EPC problem with $k$ paths
    and a series-parallel graph $G$ 
    whose number of vertices will be linear in $N = \sum_{i=1}^b w_i$.
    We proceed as follows. 


    At first, we construct a graph $G$:
    We add vertices $v_0,\dots,v_b$.
	Additionally, every vertex pair $v_{j-1}, v_j$ with $1 \leq j \leq b$
    we add $k-1$ paths of length $2$
    and one path of length $w_j+2$ connecting them.
    At last, we add a set $S$ of $k$ private neighbors to $v_0$
    and a set $T$ of $k$ private neighbors to $v_b$.
	Figure \ref{seriesParallelReduction} shows the graph constructed from a \mwnp-instance with weights $3,10,8,2,6,4,7,5,\dots$ and $k=4$. 
	We further specify $k$ paths of length $2(b+1) + N/k$ each. 
    Obviously, $G$ is series-parallel and its number of vertices is bounded by $O(N)$.
    It remains to show that the $k$ paths of length $2(b+1)+N/k$ can be embedded into $G$ if and only
    if $w_1,\dots,w_b$ can be partitioned into $k$ sets
    of size $N/k$.

    Since every feasible packing has to cover every edge exactly once,
    the $k$ paths all have to start in $S$, visit $v_0,\dots,v_b$ in this order and end in $T$.
    Assume we have embedded paths $p_1,\dots,p_k$ of arbitrary length
    into $G$ such that they start in $S$, visit $v_0,\dots,v_b$ in this order and end in $T$.
	Between vertices $v_{j-1}$ to $v_j$, for $1 \leq j \leq b$, these paths have options:
	Exactly $k-1$ paths have to take the direct path of length $2$
    and one path has to take the detour which is $w_j$ steps longer than the direct path.

	We define a partition of $w_1,\dots,w_b$ into multi-sets $S_1,\dots,S_k$ as follows:
    If path $p_i$ takes the detour between $v_{j-1}$ to $v_j$, we add $w_j$ to $S_i$.
    Now the length of $p_i$ equals $2(b+1)+|S_i|$ for $1 \le i \le k$.
	We can now see that it is possible to choose $p_1,\dots,p_k$ of length $2(b+1) + N/k$ each
    if and only if it is possible to
	partition $w_1,\dots,w_b$ into multi-sets $S_1,\dots,S_k$ of size $N/k$ each.
\end{proof}
}

\short{}{
\begin{figure}[tb]
	\begin{center}
		\begin{tikzpicture}[>=latex,scale=1.0]
		\foreach \x in {0,1,...,8}{
			\node[draw, circle, fill=lightgray, scale=0.5, label={[label distance=0.09cm]270:$v_{\x}$}] at (\x,0) (\x) {}; 
		}
		\node[] at (9,0) (9) {};
		\foreach \x in {0.5,1.5,2.5,3.5,4.5,5.5,6.5,7.5,8.5}{
			\pgfmathtruncatemacro\pre{\x-0.5};
			\pgfmathtruncatemacro\suc{\x+0.5};
			\node[draw, circle, fill=lightgray, scale=0.3] at (\x,0.2) (v1) {};
			\node[draw, circle, fill=lightgray, scale=0.3] at (\x,0) (v2) {}; 
			\node[draw, circle, fill=lightgray, scale=0.3] at (\x,-0.2) (v3) {}; 
			\draw[] (\pre) -- (v1) -- (\suc);
			\draw[] (\pre) -- (v2) -- (\suc);
			\draw[] (\pre) -- (v3) -- (\suc);
		}
		\draw (-1,0) node[label={[label distance=0.09cm]180:start}] () {};
		\node[draw, circle, fill=lightgray, scale=0.5] at (-1,0.5) (a) {};
		\node[draw, circle, fill=lightgray, scale=0.5] at (-1,0.2) (b) {};
		\node[draw, circle, fill=lightgray, scale=0.5] at (-1,-0.5) (c) {}; 
		\node[draw, circle, fill=lightgray, scale=0.5] at (-1,-0.2) (d) {};
		\foreach \parent in {a,b,c,d}{
			\draw (\parent) -- (0) {};
		}
		\draw (11,0) node[label={[label distance=0.09cm]0:end}] () {};
		\node[draw, circle, fill=lightgray, scale=0.5, label={[label distance=0.09cm]270:$v_{b}$}] at (10,0) (10) {}; 
		\node[draw, circle, fill=lightgray, scale=0.5] at (11,0.5) (a) {}; 
		\node[draw, circle, fill=lightgray, scale=0.5] at (11,0.18) (b) {}; 
		\node[draw, circle, fill=lightgray, scale=0.5] at (11,-0.5) (c) {}; 
		\node[draw, circle, fill=lightgray, scale=0.5] at (11,-0.18) (d) {}; 
		\foreach \parent in {a,b,c,d}{
			\draw (\parent) -- (10) {};
		}
		\foreach \x/\name in {1/a,1.25/b,1.5/c,1.75/d}{
			\node[draw, circle, fill=lightgray, scale=0.3] at (0.25,\x) (\name) {};
		}
		\foreach \parent/\child in {0/a,a/b,b/c,c/d}{
			\draw (\parent) -- (\child) {};
		}
		\draw (d) edge[bend left=10] (1) {};
		\draw (1,0.75) node[label={[label distance=0.2cm]180:$3$}] () {};
		\foreach \x/\name in {1/a,1.25/b,1.5/c,1.75/d,2/e,2.25/f,2.5/g,2.75/h,3/i,3.25/j,3.5/k}{
			\node[draw, circle, fill=lightgray, scale=0.3] at (1.25,\x) (\name) {};
		}
		\foreach \parent/\child in {1/a,a/b,b/c,c/d,d/e,e/f,f/g,g/h,h/i,i/j,j/k}{	
			\draw (\parent) -- (\child) {};
		}
		\draw (k) edge[bend left=10] (2) {};
		\draw (2.1,0.75) node[label={[label distance=0.2cm]180:$10$}] () {};
		\foreach \x/\name in {1/a,1.25/b,1.5/c,1.75/d,2/e,2.25/f,2.5/g,2.75/h,3/i}{
			\node[draw, circle, fill=lightgray, scale=0.3] at (2.25,\x) (\name) {};
		}
		\foreach \parent/\child in {2/a,a/b,b/c,c/d,d/e,e/f,f/g,g/h,h/i}{	
			\draw (\parent) -- (\child) {};
		}
		\draw (i) edge[bend left=10] (3) {};
		\draw (3,0.75) node[label={[label distance=0.2cm]180:$8$}] () {};
		\foreach \x/\name in {1/a,1.25/b,1.5/c}{%
			\node[draw, circle, fill=lightgray, scale=0.3] at (3.25,\x) (\name) {};
		}
		\foreach \parent/\child in {3/a,a/b,b/c}{%
			\draw (\parent) -- (\child) {};
		}
		\draw (c) edge[bend left=10] (4) {};%
		\draw (4,0.75) node[label={[label distance=0.2cm]180:$2$}] () {};
		\foreach \x/\name in {1/a,1.25/b,1.5/c,1.75/d,2/e,2.25/f,2.5/g}{
			\node[draw, circle, fill=lightgray, scale=0.3] at (4.25,\x) (\name) {};
		}
		\foreach \parent/\child in {4/a,a/b,b/c,c/d,d/e,e/f,f/g}{	%
			\draw (\parent) -- (\child) {};
		}
		\draw (g) edge[bend left=10] (5) {};%
		\draw (5,0.75) node[label={[label distance=0.2cm]180:$6$}] () {};%
		\foreach \x/\name in {1/a,1.25/b,1.5/c,1.75/d,2/e}{%
			\node[draw, circle, fill=lightgray, scale=0.3] at (5.25,\x) (\name) {};
		}
		\foreach \parent/\child in {5/a,a/b,b/c,c/d,d/e}{	%
			\draw (\parent) -- (\child) {};
		}
		\draw (e) edge[bend left=10] (6) {};%
		\draw (6,0.75) node[label={[label distance=0.2cm]180:$4$}] () {};%
		\foreach \x/\name in {1/a,1.25/b,1.5/c,1.75/d,2/e,2.25/f,2.5/g,2.75/h}{
			\node[draw, circle, fill=lightgray, scale=0.3] at (6.25,\x) (\name) {};
		}
		\foreach \parent/\child in {6/a,a/b,b/c,c/d,d/e,e/f,f/g,g/h}{	
			\draw (\parent) edge (\child) {};
		}
		\draw (h) edge[bend left=10] (7) {};
		\draw (7,0.75) node[label={[label distance=0.2cm]180:$7$}] () {};
		\foreach \x/\name in {1/a,1.25/b,1.5/c,1.75/d,2/e,2.25/f}{
			\node[draw, circle, fill=lightgray, scale=0.3] at (7.25,\x) (\name) {};
		}
		\foreach \parent/\child in {7/a,a/b,b/c,c/d,d/e,e/f}{	
			\draw (\parent) edge (\child) {};
		}
		\draw (f) edge[bend left=10] (8) {};
		\draw (8,0.75) node[label={[label distance=0.2cm]180:$5$}] () {};
		
		\draw (10,0) node[label={[label distance=0.16cm]180:$\ldots$}] () {};
		\end{tikzpicture}
	\end{center}
	\caption{The graph constructed from an instance of \mwnp with treewidth two.}\label{seriesParallelReduction}
\end{figure}
}

\section{\PP Parametrized by Path Dependent Attributes}\label{sec:pathDependent}

In the previous section we solved \PP on forests. Since \PP is \np-hard even
for graphs with treewidth 2, we try to find some path dependent parameters to
cope with its difficulty. 
At first, we will restrict the number of paths, then we
will bound the length of each path and finally we consider
the sum of the lengths of all paths.

\paragraph*{Number of Paths}

We denote the number of paths of an instance by $k$.
We start with $k=1$.
Consider an instance where the length of the single path corresponds to the number of vertices in a complete graph $G$.
\begin{observation}\label{obs:epc-nphard-one-path}
	Since Hamiltonian Path is \np-hard, also {\pp} for $k=1$ is \np-hard
\end{observation}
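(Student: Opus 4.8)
The plan is to observe that for a \emph{single} path the edge-disjointness requirement is vacuous, so that \PP with $k=1$ degenerates to the question of whether a given graph contains a simple path of a prescribed length; this question is exactly \textsc{Hamiltonian Path} once the length is chosen correctly. First I would unfold the definition of embedding from \Cref{section:preliminaries}. An embedding of a path $p$ of length $l$ into $G$ is an injective map from the $l+1$ vertices of $p$ to distinct vertices of $G$ that sends consecutive vertices to adjacent vertices; by injectivity this is precisely a \emph{simple} path with $l$ edges in $G$. Since the list contains only one path, no two embedded paths can share an edge, so the edge-disjointness condition is automatically satisfied and imposes nothing.

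Next I would set up the (trivial) reduction from \textsc{Hamiltonian Path}. Given an instance $G$ on $n$ vertices, I would output the \PP instance consisting of the same graph $G$ together with the single path $p$ of length $l_1 = n-1$. This is clearly computable in polynomial time. I would then verify the equivalence of the instances: a simple path of length $n-1$ in $G$ uses $l_1+1 = n$ distinct vertices, hence all of $V(G)$, and is therefore a Hamiltonian path; conversely any Hamiltonian path of $G$ is a simple path with $n-1$ edges and thus gives an embedding of $p$. Consequently $(G,\{p\})$ is a yes-instance of \PP if and only if $G$ admits a Hamiltonian path.

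The only point requiring any care is the bookkeeping between the number of edges and the number of vertices of the path, together with the remark that injectivity of the embedding forces the single path to be simple so that length $n-1$ compels it to visit every vertex; there is no genuine obstacle. Since \textsc{Hamiltonian Path} is \np-hard and the reduction is polynomial while fixing $k=1$, \np-hardness transfers to \PP for a single path.
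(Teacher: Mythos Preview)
Your argument is correct and matches the paper's approach: the paper merely precedes the observation with a one-line remark that taking a single path whose length matches the number of vertices of $G$ yields a Hamiltonian-path instance, and your proposal spells out exactly this reduction (with the correct bookkeeping $l_1=n-1$) in detail.
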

On the other side, for $k=1$ the special case of $\EPC$ becomes easy.
\begin{observation}
	{\EPC} is solvable in polynomial time for $k=1$ by deciding if the input graph is a path of length $l_1$.
\end{observation}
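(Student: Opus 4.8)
The plan is to unfold the definitions and observe that a single covering path leaves $G$ with essentially no freedom. For $k=1$ we have one path $p_1$ of length $l_1$, which has $l_1+1$ vertices and $l_1$ edges. An embedding of $p_1$ into $G$ is by definition an \emph{injective} vertex map $f$ with consecutive path-vertices sent to adjacent vertices of $G$. Because $f$ is injective, the image is not merely a walk but a \emph{simple} path in $G$ consisting of exactly $l_1$ edges. The exact-cover condition of \EPC then demands that this image uses every edge of $G$ precisely once, so the edge set of the embedded path must coincide with all of $E(G)$.

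From this I would extract the equivalence claimed in the statement. If $(G,\{p_1\})$ is a yes-instance, the covering subgraph is a simple path with $l_1$ edges that equals $G$, hence $G$ is a path of length $l_1$. Conversely, if $G$ is itself a path of length $l_1$, then mapping $p_1$ along $G$ covers every edge exactly once, giving a valid exact packing. Thus a yes-instance exists if and only if $G$ is a path of length $l_1$.

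It remains to observe that this last condition is checkable in polynomial time. I would verify that $G$ is connected, has exactly $l_1$ edges, and that every vertex has degree at most two with exactly two vertices of degree one, equivalently that $G$ is a tree on $l_1+1$ vertices with maximum degree two; all of this is testable in time polynomial in the size of $G$.

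No step here is a genuine obstacle, but the one point worth stating carefully is the role of injectivity: it is precisely injectivity that forbids any walk-like covering and forces the covering subgraph to be a simple path rather than an Eulerian-type traversal, which is what pins $G$ down to a single path rather than an arbitrary graph with an Eulerian path.
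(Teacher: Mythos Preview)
Your argument is correct and complete. The paper itself offers no proof for this observation, treating it as self-evident; your write-up simply makes explicit the one-line reasoning the authors leave implicit, namely that an injective embedding covering every edge forces $G$ to coincide with the image path.
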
 

Unfortunately, for fixed $k\geq 2$ restricting the number of paths is not enough to gain a polynomial time algorithm.  
This holds for \EPC and therefore also for \PP.

\begin{theorem}\label{the:ecp-npcomplete-two-paths}
    Let $k \ge 2$.
	\EPC\ with $k$ paths is \np-complete on $4$-regular graphs.
\end{theorem}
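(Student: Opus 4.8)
The plan is to reduce from a known \np-complete problem that naturally decomposes a structure into exactly $k$ parts so that the "exactly once" covering requirement of \EPC is automatically satisfied. Since we already used \mwnp for the tree and treewidth-two reductions, a natural candidate here is an Eulerian-type decomposition problem: on a $4$-regular graph every vertex has even degree, so the whole graph is Eulerian, and a partition of the edge set into $k$ open trails (paths) corresponds exactly to an \EPC solution with $k$ paths. For a connected Eulerian graph, its edge set can be partitioned into exactly $k$ edge-disjoint trails if and only if $k \geq \max\{1, (\text{number of odd-degree vertices})/2\}$; but since all degrees are even, the only obstruction is whether we can force the trails to have prescribed lengths. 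So the first step is to set up the length requirements: fix $l_1,\dots,l_k$ with $\sum l_i = |E|$, and show that deciding whether a $4$-regular graph admits an edge-decomposition into $k$ trails of exactly these prescribed lengths is \np-hard.

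First I would handle the case $k=2$ explicitly, since Theorem~\ref{the:ecp-npcomplete-two-paths} asserts hardness already there, and then lift to general fixed $k \geq 2$. For $k=2$ I would reduce from a problem about splitting an Eulerian graph into two trails of specified lengths. A clean source is the observation that an Eulerian $4$-regular graph decomposes into closed trails, and the question of carving out a sub-trail of a prescribed length is where the combinatorial difficulty lives. Concretely, I would aim to encode an \np-hard numerical or routing constraint into the requirement that one of the two paths has length exactly $l_1$ (forcing the other to have length $|E|-l_1$); the $4$-regularity is maintained by attaching gadgets that have all vertices of degree $4$ and that admit the desired trail only when the underlying instance is a yes-instance.

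The key steps, in order, are: (1) choose the source problem and its decomposition property; (2) build a gadget construction that is $4$-regular everywhere and whose only valid edge-decomposition into $k$ trails corresponds to a solution of the source instance, with the path lengths $l_1,\dots,l_k$ set so that $\sum_i l_i = |E(G)|$; (3) verify the forward direction, that a source solution yields $k$ edge-disjoint paths covering every edge exactly once with the right lengths; (4) verify the reverse direction, that any valid \EPC solution forces the gadget behavior encoding a source solution; and (5) argue membership in \np, which is immediate since a proposed embedding of the $k$ paths can be checked in polynomial time. For general $k \geq 2$ I would pad with $k-2$ additional rigid gadget components (or attach $k-2$ disjoint Eulerian pieces, each forced to be covered by exactly one path of its own length), reducing the $k$-path case to the two-path core.

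The main obstacle I anticipate is step (2): enforcing $4$-regularity \emph{and} rigid control over trail lengths simultaneously. Regularity removes the usual degree-parity handle that one exploits in Eulerian-trail arguments, so the gadget must instead pin down where each path enters and exits using only local structure, and must prevent "shortcuts" where a path uses gadget edges in an unintended order to cheat the prescribed length. Getting the reverse direction airtight — ruling out all alternative decompositions that happen to hit the target lengths without encoding a genuine solution — is where the careful case analysis will be required, and I would expect to need the length values to be chosen with enough numerical slack (\eg via a \mwnp- or partition-style encoding) that no spurious trail partition can accidentally match $l_1,\dots,l_k$.
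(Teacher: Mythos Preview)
Your proposal is a plan, not a proof: you never fix a concrete source problem and never give a gadget, so there is nothing to verify. More importantly, the route you sketch cannot work as stated. First, you repeatedly conflate trails with paths. In a $4$-regular graph every vertex has even degree, so decomposing the edge set into \emph{trails} is trivial (take an Euler circuit and cut it), and trail lengths can be adjusted freely; the whole difficulty of \EPC is that paths may not repeat \emph{vertices}, and your Eulerian framing gives you no handle on that. Second, the ``numerical slack via a \mwnp- or partition-style encoding'' idea fails for fixed $k=2$: with two parts and unary weights (the graph has $\Theta(\sum_i l_i)$ edges, so lengths are effectively unary) \textsc{Partition} is solvable in polynomial time, so the length constraint alone cannot carry the hardness. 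You need a \emph{structural} source of hardness, not a numerical one.

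The paper does exactly that: it reduces from \textsc{Hamiltonian Circuit} in $3$-regular graphs. Each vertex of the cubic graph $G$ is replaced by a $K_3$, each edge by a degree-$4$ ``connector'' vertex wired to two vertices of each adjacent triangle, and one designated edge is split into four degree-$1$ terminals that force the endpoints of the two paths. In the resulting graph $H$ every non-terminal vertex has degree $4$, so both paths must pass through it; a local case analysis on each $K_3$ shows that an exact cover by two equal-length paths exists iff $G$ has a Hamiltonian circuit (the circuit edges are those whose connector vertex is traversed by both paths, the remaining edges form a perfect matching). For $k>2$ the paper simply adds $k-2$ isolated edges to $H$ and $k-2$ length-$1$ paths to $P$. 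If you want to salvage your outline, the missing idea is precisely this: replace the numerical length control by a vertex-gadget whose degree-$4$ structure forces both paths through every vertex and thereby encodes a Hamiltonicity constraint.
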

\short{}{
\begin{proof}
We show that \EPC\ is NP-complete for $k=2$ by a reduction from the Hamiltonian Circuit problem in 3-regular graphs \cite{Garey1976hardnessHamCyc}.
Let $G= (V_G,E_G)$ denote a $3$-regular graph with $|V_G|=n$ of an Hamiltonian Circuit Instance, where one arbitrary but fixed edge $e_0\in E_G$ should be contained in the circuit.
We construct a graph $H$ such that $G$ has a Hamiltonian Circuit if and only if $H$ has an exact path cover with two equal-length paths.
We construct $H$ as follows:
We replace every vertex $v_i\in V_G$ by three vertexes $V_H^i=\{v_{v_i,1},v_{v_i,2},v_{v_i,3}\}$ and add three edges such that they form a  clique $K_3$ (see filled vertexes/thick edges in Figures \ref{zoomHamiltonianCircuit}/\ref{zoomHamiltonianCircuit2}). For every edge $e_l=(v_i,v_j) \in E_G\setminus e_0$ we introduce an additional vertex $v_{e_l}$. We add four incident edges to $v_{e_l}$ such that two edges are incident to two different vertexes of $V_H^i$ as well as $V_H^j$ and no vertex $v\in V_H$ has degree greater than four (see unfilled vertexes). For edge $e_0=(v_i,v_j)\in G$ we introduce four vertexes $V_H^0=\{v_{e_0,0},v_{e_0,1},v_{e_0,2},v_{e_0,3}\}$ instead of one vertex $v_{e_0}$ and connect each vertex with one edge to the corresponding remaining vertexes of $V_H^i$ and $V_H^j$ (see Figure \ref{edgeZero} and \ref{zoomEdgeZero}), such that each edge $e\in V_H^0$ is a terminal node.

\begin{figure}[t]
	\begin{center}
		\begin{minipage}{.49\textwidth}
			\centering
			\begin{tikzpicture}[>=latex,scale=1]	
			\useasboundingbox (0,-0.65) rectangle (6,4.5);
			\draw (3,3.5) node[label=above:{$G$}] (g) {};
			
			\node[draw, circle, fill=lightgray, scale=0.5, label=above left:{$v_1$}] at (1.04,3.3) (a) {};
			\node[draw, circle, fill=lightgray, scale=0.5, label=above right:{$v_2$}] at (4.96,3.3) (b) {};

			\node[draw, circle, fill=lightgray, scale=0.5, label=below:{$v_3$}] at (3,0.1) (c) {};

			\node[draw, circle, fill=red, scale=0.5, label=above:{\textcolor{red}{$v_0$}}] at (3, 2.3) (d) {};
			
			\draw (a) edge[color=blue] node[above] {$e_1$} (d) {};
			\draw (b) edge node[above] {$e_2$} (d) {};
			\draw (c) edge node[right] {$e_3$} (d) {};
			
			\draw (0.1, 2.8) node (a1) {};
			\draw (1.04,4.3) node (a2) {}; 
			\draw (a) edge[thin,dotted] (a1) {};
			\draw (a) edge[thin,dotted] (a2) {};
			\draw (5.9, 2.8) node (b1) {};
			\draw (4.96,4.3) node (b2) {}; 
			\draw (b) edge[thin,dotted] (b1) {};
			\draw (b) edge[thin,dotted] (b2) {};
			\draw (2.2,-0.4) node (c1) {};
			\draw (3.8,-0.4) node (c2) {}; 
			\draw (c) edge[thin,dotted] (c1) {};
			\draw (c) edge[thin,dotted] (c2) {};
			\end{tikzpicture}
			\caption{Component of Graph $G$.}
			\label{zoomHamiltonianCircuit}
		\end{minipage}%
		\begin{minipage}{0.49\textwidth}
			\centering
			\begin{tikzpicture}[>=latex,scale=1]
			
			\useasboundingbox (0,-0.65) rectangle (6,4.5);
			\draw (3,3.5) node[label=above:{$H$}] (h) {};
			
			\node[draw, circle, fill=lightgray, scale=0.5, label=below:{$v_{v_2,1}$}] at (4.96, 2.8) (b1) {};
			\node[draw, circle, fill=lightgray, scale=0.5, label=above:{$v_{v_2,2}$}] at (4.5, 3.6) (b2) {};
			\node[draw, circle, fill=lightgray, scale=0.5, label=right:{$v_{v_2,3}$}] at (5.42, 3.6) (b3) {};
			\draw (5.92, 2.8) node[] (b4) {};
			\draw (4.96, 4.4) node[] (b5) {}; 
			
			\node[draw, circle, fill=lightgray, scale=0.5, label=below:{$v_{v_1,3}$}] at (1.04, 2.8) (a1) {};
			\node[draw, circle, fill=lightgray, scale=0.5, label=above:{$v_{v_1,1}$}] at (1.5, 3.6) (a2) {};
			\node[draw, circle, fill=lightgray, scale=0.5, label=left:{$v_{v_1,2}$}] at (0.58, 3.6) (a3) {};			
			\draw (0.06, 2.8) node[] (a4) {};
			\draw (1.04, 4.4) node[] (a5) {}; 			
			
			\node[draw, circle, fill=lightgray, scale=0.5, label=below:{$v_{v_3,1}$}] at (3, -0.4) (d1) {};
			\node[draw, circle, fill=lightgray, scale=0.5, label=right:{$v_{v_3,2}$}] at (3.46, 0.40) (d2) {};
			\node[draw, circle, fill=lightgray, scale=0.5, label=left:{$v_{v_3,3}$}] at (2.54, 0.40) (d3) {};	
			\draw (3.92, -0.4) node[] (d4) {};
			\draw (2.08, -0.4) node[] (d5) {}; 
			
			\node[draw, circle, fill=red, scale=0.5, label=above:{$v_{v_0,1}$}] at (3, 2.8) (c1) {};
			\node[draw, circle, fill=red, scale=0.5, label=right:{$v_{v_0,2}$}] at (3.46, 2.0) (c2) {};
			\node[draw, circle, fill=red, scale=0.5, label=left:{$v_{v_0,3}$}] at (2.54, 2.0) (c3) {};			
			
			\node[draw, circle, color=blue, scale=0.5, label=above right:{\textcolor{blue}{$v_{e_1}$}}] at (2.02, 2.8) (e0) {};
			\node[draw, circle, color=black, scale=0.5, label=above left:{$v_{e_2}$}] at  (4.02, 2.8) (e1) {};
			\node[draw, circle, color=black, scale=0.5, label=right: {$v_{e_3}$}] at  (3, 1.2) (e2) {};
			
			\draw (a1) edge[ultra thick] node[above] {$$} (a2) {};
			\draw (a1) edge[ultra thick] node[above] {$$} (a3) {};
			\draw (a2) edge[ultra thick] node[right] {$$} (a3) {};
			
			\draw (b1) edge[ultra thick] node[above] {$$} (b2) {};
			\draw (b1) edge[ultra thick] node[above] {$$} (b3) {};
			\draw (b2) edge[ultra thick] node[right] {$$} (b3) {};
			
			\draw (c1) edge[ultra thick,color=red] node[above] {$$} (c2) {};
			\draw (c1) edge[ultra thick,color=red] node[above] {$$} (c3) {};
			\draw (c2) edge[ultra thick,color=red] node[right] {$$} (c3) {};
			
			\draw (d1) edge[ultra thick] node[above] {$$} (d2) {};
			\draw (d1) edge[ultra thick] node[above] {$$} (d3) {};
			\draw (d2) edge[ultra thick] node[right] {$$} (d3) {};
			
			\draw (c1) edge[color = blue] node[above] {$$} (e0) {};
			\draw (c3) edge[color = blue] node[above] {$$} (e0) {};
			\draw (a1) edge[color = blue] node[above] {$$} (e0) {};
			\draw (a2) edge[color = blue] node[right] {$$} (e0) {};
			
			\draw (c1) edge node[above] {$$} (e1) {};
			\draw (c2) edge node[above] {$$} (e1) {};
			\draw (b1) edge node[above] {$$} (e1) {};
			\draw (b2) edge node[right] {$$} (e1) {};
			
			\draw (c2) edge node[above] {$$} (e2) {};
			\draw (c3) edge node[above] {$$} (e2) {};
			\draw (d2) edge node[above] {$$} (e2) {};
			\draw (d3) edge node[right] {$$} (e2) {};    	
			
			\draw (a1) edge[thin,dotted] (a4) {};
			\draw (a3) edge[thin,dotted] (a4) {};
			\draw (a2) edge[thin,dotted] (a5) {};
			\draw (a3) edge[thin,dotted] (a5) {};
			\draw (b1) edge[thin,dotted] (b4) {};
			\draw (b3) edge[thin,dotted] (b4) {};
			\draw (b2) edge[thin,dotted] (b5) {};
			\draw (b3) edge[thin,dotted] (b5) {};
			\draw (d1) edge[thin,dotted] (d4) {};
			\draw (d2) edge[thin,dotted] (d4) {};
			\draw (d1) edge[thin,dotted] (d5) {};
			\draw (d3) edge[thin,dotted] (d5) {};
			\end{tikzpicture}
			\caption{Modified component in $H$.}
			\label{zoomHamiltonianCircuit2}
		\end{minipage}
	\end{center}
\end{figure}

\begin{figure}[t]
	\begin{center}
		\begin{minipage}{.49\textwidth}
			\centering
			\begin{tikzpicture}[>=latex,scale=1]	
			\useasboundingbox (0,-0.3) rectangle (4,1);
			\node[draw, circle, color=black, scale=0.5, label=left:{$v_i$}] at  (1,0.575) (a) {};
			\node[draw, circle, color=black, scale=0.5,label=right:{$v_j$}] at  (3,0.575) (b) {};
			
			\draw (a) edge node[above] {$e_0$} (b) {};
			
			\draw (0,0.0) node (a1) {};
			\draw (0,1.15) node (a2) {}; 
			\draw (a) edge[thin,dotted] (a1) {};
			\draw (a) edge[thin,dotted] (a2) {};
			\draw (4,0) node (b1) {};
			\draw (4,1.15) node (b2) {}; 
			\draw (b) edge[thin,dotted] (b1) {};
			\draw (b) edge[thin,dotted] (b2) {};
			\end{tikzpicture}
			\caption{Edge $e_0$ in $G$.}
			\label{edgeZero}
		\end{minipage}%
		\begin{minipage}{0.49\textwidth}
			\centering
			\begin{tikzpicture}[>=latex,scale=1]	
			\useasboundingbox (0,-0.3) rectangle (5,1);
			
			\node[draw, circle, fill=lightgray, scale=0.5,label=below:{$v_{{v_i},1}$}] at  (1,0) (a1) {};
			\node[draw, circle, fill=lightgray, scale=0.5,label=above:{$v_{{v_i},2}$}] at  (1,1.15) (a2) {};
			\node[draw, circle, fill=lightgray, scale=0.5,label=left:{$v_{{v_i},3}$}] at  (0,0.575) (a3) {};
			
			\node[draw, circle, fill=lightgray, scale=0.5,label=below:{$v_{{v_j},1}$}] at  (4,0) (b1) {};
			\node[draw, circle, fill=lightgray, scale=0.5,label=above:{$v_{{v_j},2}$}] at  (4,1.15) (b2) {};
			\node[draw, circle, fill=lightgray, scale=0.5,label=right:{$v_{{v_j},3}$}] at  (5,0.575) (b3) {};			
			
			\node[draw, circle, fill=white, scale=0.5,label=above:{$v_{{e_0},0}$}] at  (2,1.15) (e0) {};	
			\node[draw, circle, fill=white, scale=0.5,label=below:{$v_{{e_0},1}$}] at  (2,0) (e1) {};	
			\node[draw, circle, fill=white, scale=0.5,label=below:{$v_{{e_0},2}$}] at  (3,0)  (e2) {};	
			\node[draw, circle, fill=white, scale=0.5,label=above:{$v_{{e_0},3}$}] at  (3,1.15) (e3) {};	
			
			\draw (a1) edge node[above] {} (e1) {};
			\draw (a2) edge node[above] {} (e0) {};
			\draw (b1) edge node[above] {} (e2) {};
			\draw (b2) edge node[above] {} (e3) {};
			
			\draw (a1) edge[ultra thick] node[above] {} (a2) {};
			\draw (a1) edge[ultra thick] node[above] {} (a3) {};
			\draw (a2) edge[ultra thick] node[above] {} (a3) {};
			
			\draw (b1) edge[ultra thick] node[above] {} (b2) {};
			\draw (b1) edge[ultra thick] node[above] {} (b3) {};
			\draw (b2) edge[ultra thick] node[above] {} (b3) {};
			
			\draw (0,1.65) node (a4) {};
			\draw (0,-0.5) node (a5) {}; 
			\draw (a2) edge[thin,dotted] (a4) {};
			\draw (a1) edge[thin,dotted] (a5) {};
			\draw (a3) edge[thin,dotted] (a4) {};
			\draw (a3) edge[thin,dotted] (a5) {};
			\draw (5,1.65) node (b4) {};
			\draw (5,-0.5) node (b5) {}; 
			\draw (b2) edge[thin,dotted] (b4) {};
			\draw (b1) edge[thin,dotted] (b5) {};
			\draw (b3) edge[thin,dotted] (b4) {};
			\draw (b3) edge[thin,dotted] (b5) {};
			\end{tikzpicture}
			\caption{Four terminal nodes $V_H^0$.}
			\label{zoomEdgeZero}
		\end{minipage}
	\end{center}
\end{figure}
Assume that $G$ has a Hamiltonian Circuit. Since $G$ is 3-regular, one incident edge to every vertex is not covered by this circuit. Observe that these edges correspond to a perfect matching in $G$. In the remaining proof we will denote the edges that are covered by the matching with $E_G^M$ and the ones covered by the circuit $E_G^C$. We rename the vertexes for all $V_H^i$ such that the vertexes which are incident to the matching are those with the smallest index.
We construct two paths $A$ and $B$ in $H$. 
Consider $e_l=(v_i,v_j)\in E_G^M$. Without loss of generality we assign the subpath $(v_{v_i,1}, v_{e_l}, v_{v_i,2}, v_{v_i,3})$ to $A$, and $(v_{v_j,1}, v_{e_l}, v_{v_j,2}, v_{v_j,3})$ to $B$. To cover the cliques $V_H^i$ and $V_H^j$ completely we assign $(v_{v_j,1}, v_{v_j,2}, v_{v_j,3})$ to $A$ and $(v_{v_i,1}, v_{v_i,2}, v_{v_i,3})$ to $B$ (see Figure \ref{subpaths}). 
For $e_l=(v_i,v_j)\in E_G^C$ we connect the ends of all the subpaths in the cliques via $v_{e_l}$ such that all subpaths of $A$ and $B$ are connected. Due to our construction of $H$ this is always possible. Note that $A$ as well as $B$ correspond to a Hamiltonian Circuit in case you ignore the special structure for edge $e_0$.

For edge $e_0$ we have to adapt our strategy. Since $v_{e_0}$ is replaced by four vertexes with degree one, $A$ and $B$ have to start and finish in one of these vertexes. If $e_0\in E_G^M$, the outgoing edges of both endpoints of both paths are incident to the same clique, if $e_0\in E_G^C$, one edge is incident to each clique respectively (see Figure \ref{matchingEdge}/\ref{circuitEdge}). The resulting two paths form an exact path cover of $H$.

\begin{figure}[t]
	\begin{center}
	\begin{minipage}[t]{0.32\textwidth}
			\centering
			\begin{tikzpicture}
			\useasboundingbox (0,-0.3) rectangle (4,1.5);
			\node[draw, circle, fill=lightgray, scale=0.5,label=below:{$v_{{v_i},1}$}] at  (1,0) (a1) {};
			\node[draw, circle, fill=lightgray, scale=0.5,label=above:{$v_{{v_i},2}$}] at  (1,1.15) (a2) {};
			\node[draw, circle, fill=lightgray, scale=0.5,label=right:{$v_{{v_i},3}$}] at  (0,0.575) (a3) {};
			\node[draw, circle, fill=lightgray, scale=0.5,label=below:{$v_{{v_j},1}$}] at  (3,0) (b1) {};
			\node[draw, circle, fill=lightgray, scale=0.5,label=above:{$v_{{v_j},2}$}] at  (3,1.15) (b2) {};
			\node[draw, circle, fill=lightgray, scale=0.5,label=left:{$v_{{v_j},3}$}] at  (4,0.575) (b3) {};
			
			\node[draw, circle, fill=white, scale=0.5,label=above:{$v_{e_l}$}] at  (2,0.575) (e0) {};
			
			\draw (a1) edge[ultra thick,color=red] node[above] {} (e0) {};
			\draw (a2) edge[ultra thick,color=red] node[above] {} (e0) {};
			\draw (b1) edge node[above] {} (e0) {};
			\draw (b2) edge node[above] {} (e0) {};
			
			\draw (a1) edge node[above] {} (a2) {};
			\draw (a1) edge node[above] {} (a3) {};
			\draw (a2) edge[ultra thick,color=red] node[above] {} (a3) {};
			
			\draw (b1) edge[ultra thick,color=red] node[above] {} (b2) {};
			\draw (b1) edge[ultra thick,color=red] node[above] {} (b3) {};
			\draw (b2) edge node[above] {} (b3) {};
			
			\draw (0,1.65) node (a4) {};
			\draw (0,-0.5) node (a5) {}; 
			\draw (a2) edge[thin,dotted] (a4) {};
			\draw (a1) edge[thin,dotted,color=red] (a5) {};
			\draw (a3) edge[thin,dotted,color=red] (a4) {};
			\draw (a3) edge[thin,dotted] (a5) {};
			\draw (4,1.65) node (b4) {};
			\draw (4,-0.5) node (b5) {}; 
			\draw (b2) edge[thin,dotted,color=red] (b4) {};
			\draw (b1) edge[thin,dotted] (b5) {};
			\draw (b3) edge[thin,dotted] (b4) {};
			\draw (b3) edge[thin,dotted,color=red] (b5) {};
			\end{tikzpicture}
			\caption{Paths \textcolor{red}{$\mathbf{A}$}/$B$.}
			\label{subpaths}
		\end{minipage}
		\begin{minipage}[t]{0.32\textwidth}
			\centering
			\begin{tikzpicture}	
			\useasboundingbox (0,-0.3) rectangle (4,1.5);
			\node[draw, circle, fill=lightgray, scale=0.5,label=below:{$v_{{v_i},1}$}] at  (1,0) (a1) {};
			\node[draw, circle, fill=lightgray, scale=0.5,label=above:{$v_{{v_i},2}$}] at  (1,1.15) (a2) {};
			\node[draw, circle, fill=lightgray, scale=0.5,label=right:{$v_{{v_i},3}$}] at  (0,0.575) (a3) {};
			\node[draw, circle, fill=lightgray, scale=0.5,label=below:{$v_{{v_j},1}$}] at  (3,0) (b1) {};
			\node[draw, circle, fill=lightgray, scale=0.5,label=above:{$v_{{v_j},2}$}] at  (3,1.15) (b2) {};
			\node[draw, circle, fill=lightgray, scale=0.5,label=left:{$v_{{v_j},3}$}] at  (4,0.575) (b3) {};
			
			\node[draw, circle, fill=white, scale=0.5] at  (1.6,1.15) (e0) {};
			\node[draw, circle, fill=white, scale=0.5] at  (1.6,0) (e1) {};
			\node[draw, circle, fill=white, scale=0.5] at  (2.4,0) (e2) {};
			\node[draw, circle, fill=white, scale=0.5] at  (2.4,1.15)  (e3) {};			
			
			\draw (a1) edge[ultra thick,color=red] node[above] {} (e1) {};
			\draw (a2) edge[ultra thick,color=red] node[above] {} (e0) {};
			\draw (b1) edge node[above] {} (e2) {};
			\draw (b2) edge node[above] {} (e3) {};
			
			\draw (a1) edge node[above] {} (a2) {};
			\draw (a1) edge[ultra thick,color=red] node[above] {} (a3) {};
			\draw (a2) edge node[above] {} (a3) {};
			
			\draw (b1) edge[ultra thick,color=red] node[above] {} (b2) {};
			\draw (b1) edge node[above] {} (b3) {};
			\draw (b2) edge[ultra thick,color=red] node[above] {} (b3) {};
			
			\draw (0,1.65) node (a4) {};
			\draw (0,-0.5) node (a5) {}; 
			\draw (a2) edge[thin,dotted,color=red] (a4) {};
			\draw (a1) edge[thin,dotted] (a5) {};
			\draw (a3) edge[thin,dotted] (a4) {};
			\draw (a3) edge[thin,dotted,color=red] (a5) {};
			\draw (4,1.65) node (b4) {};
			\draw (4,-0.5) node (b5) {}; 
			\draw (b2) edge[thin,dotted] (b4) {};
			\draw (b1) edge[thin,dotted,color=red] (b5) {};
			\draw (b3) edge[thin,dotted,color=red] (b4) {};
			\draw (b3) edge[thin,dotted] (b5) {};
			\end{tikzpicture}
			\caption{Edge $e_0\in E_G^M$.}
			\label{matchingEdge}
		\end{minipage}
		\begin{minipage}[t]{0.32\textwidth}
			\centering
			\begin{tikzpicture}	
			\useasboundingbox (0,-0.3) rectangle (4,1.5);
			\node[draw, circle, fill=lightgray, scale=0.5,label=below:{$v_{{v_i},1}$}] at  (1,0) (a1) {};
			\node[draw, circle, fill=lightgray, scale=0.5,label=above:{$v_{{v_i},2}$}] at  (1,1.15) (a2) {};
			\node[draw, circle, fill=lightgray, scale=0.5,label=right:{$v_{{v_i},3}$}] at  (0,0.575) (a3) {};
			\node[draw, circle, fill=lightgray, scale=0.5,label=below:{$v_{{v_j},1}$}] at  (3,0) (b1) {};
			\node[draw, circle, fill=lightgray, scale=0.5,label=above:{$v_{{v_j},2}$}] at  (3,1.15) (b2) {};
			\node[draw, circle, fill=lightgray, scale=0.5,label=left:{$v_{{v_j},3}$}] at  (4,0.575) (b3) {};
			
			\node[draw, circle, fill=white, scale=0.5] at  (1.6,1.15) (e0) {};
			\node[draw, circle, fill=white, scale=0.5] at  (1.6,0) (e1) {};
			\node[draw, circle, fill=white, scale=0.5] at  (2.4,0) (e2) {};
			\node[draw, circle, fill=white, scale=0.5] at  (2.4,1.15)  (e3) {};	
			
			\draw (a1) edge[ultra thick,color=red] node[above] {} (e1) {};
			\draw (a2) edge node[above] {} (e0) {};
			\draw (b1) edge node[above] {} (e2) {};
			\draw (b2) edge[ultra thick,color=red] node[above] {} (e3) {};
			
			\draw (a1) edge[ultra thick,color=red] node[above] {} (a2) {};
			\draw (a1) edge[ultra thick] node[above] {} (a3) {};
			\draw (a2) edge[ultra thick,color=red] node[above] {} (a3) {};
			
			\draw (b1) edge[ultra thick,color=red] node[above] {} (b2) {};
			\draw (b1) edge[ultra thick,color=red] node[above] {} (b3) {};
			\draw (b2) edge node[above] {} (b3) {};
			
			\draw (0,1.65) node (a4) {};
			\draw (0,-0.5) node (a5) {}; 
			\draw (a2) edge[thin,dotted] (a4) {};
			\draw (a1) edge[thin,dotted] (a5) {};
			\draw (a3) edge[thin,dotted] (a4) {};
			\draw (a3) edge[thin,dotted,color=red] (a5) {};
			\draw (4,1.65) node (b4) {};
			\draw (4,-0.5) node (b5) {}; 
			\draw (b2) edge[thin,dotted] (b4) {};
			\draw (b1) edge[thin,dotted] (b5) {};
			\draw (b3) edge[thin,dotted,color=red] (b4) {};
			\draw (b3) edge[thin,dotted] (b5) {};
			\end{tikzpicture}
			\caption{Edge $e_0\in E_G^C$}
			\label{circuitEdge}
		\end{minipage}
	\end{center}
\end{figure}

Assume $H$ has an exact path cover with two paths. If the endpoints of the paths are incident to the same clique, the corresponding edge in $G$ is not contained in the Hamiltonian Circuit and vice versa. All other vertexes have degree four, so they have to be traversed by both paths. Let $H_i'$ denote the subgraph that contains the clique that corresponds to vertex $v_i\in G$ and the three vertexes that represent the incident edges. $H_i'$ has to be entered and left once by each path. So for every subgraph one incident edge-vertex is only traversed by one path. Since $H$ has an exact path cover, there exists a second subgraph where this vertex is also contained in the other path. Thus the corresponding edge in $G$ is not contained in the Hamiltonian Circuit. Since $G$ is $3$-regular, all remaining edges  form a Hamiltonian Circuit.\\
For $k>2$ one can add $k-2$ paths of length one to $P$ as well as $k-2$ isolated edges to $G$. Since this edges can only be covered by the new paths the proof above also holds for all $k>2$.
\end{proof}
}

\paragraph*{Paths with bounded length}

Observe that all hardness proofs that we have seen so far somehow involve paths of a
certain length. Thus, we analyze the complexity of \PP based on the length of
the paths we want to pack.

\medskip
\begin{tabularx}{0.95\textwidth}{ r X p{0.5cm} }
	\multicolumn{2}{l}{\PIP} \\
	Input: & A set of paths \( P = \{p_{1},\dots,p_{k} \} \) of length \(l_1=\ldots=l_k=i\), a graph \(G=(V,E)\). \\
	Question: & Is there an edge-disjoint embedding of $P$ into $G$ \suchthat each edge $e\in E$ is covered exactly once?
\end{tabularx}
\medskip

\ptwop is solvable in polynomial time by reformulating it as a matching problem on the line graph.
We show that \pthreep is already \np-hard
via a reduction from the 3-dimensional matching problem, one of Karp's original 21 NP-complete problems.
The 3-dimensional matching problem takes as input
sets $X,Y,Z$ of size $n$ and $T \subseteq X \times Y \times Z$.
The question is whether there exists a set $M \subseteq T$ of size $n$
such that for all $(x_1,y_1,z_1), (x_2,y_2,z_2) \in M$,
$x_1 \neq x_2$, $y_1 \neq y_2$, $z_1 \neq z_2$.
The reduction is similar to the $P_2$-packing reduction in \cite{pantel2001graph}.

\begin{theorem}\label{the:pip-nphard}
    \pthreep is \textnormal{NP}-hard.
\end{theorem}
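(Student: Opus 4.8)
Since $i=3$ and every edge must be covered exactly once, a yes-instance of \pthreep is exactly a partition of $E(G)$ into paths of length three (three edges each); in particular $|E(G)|$ is forced to be a multiple of three and the number of paths equals $|E(G)|/3$. The plan is to reduce from $3$-dimensional matching: from an instance $(X,Y,Z,T)$ I build in polynomial time a graph $G$ that admits such a partition if and only if $T$ contains a perfect matching.

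\textbf{Construction.} For every element $a\in X\cup Y\cup Z$ I introduce a vertex $v_a$, and for every triple $t=(x,y,z)\in T$ a \emph{triple gadget}: a six-cycle $r^t_1 r^t_2\cdots r^t_6$ with three \emph{connector edges} $r^t_1 v_x$, $r^t_3 v_y$, $r^t_5 v_z$ attached at alternating cycle vertices. Such a gadget has nine edges and two intended decompositions into length-three paths. In the \emph{selected} decomposition $v_x r_1 r_2 r_3$, $v_y r_3 r_4 r_5$, $v_z r_5 r_6 r_1$ the three connector edges are consumed inside the gadget and each of its three paths ends at an element vertex. In the \emph{unselected} decomposition $r_6 r_1 r_2 r_3$, $r_3 r_4 r_5 r_6$ only the six cycle edges are covered, and the three connector edges are left to be covered from the element side (each by a path $r_i - v_a - w - w'$ using two element-gadget edges). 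Finally, to each $v_a$ I attach a small fixed \emph{element gadget} whose edges are designed to force, in every decomposition, exactly one incident connector edge into the selected state; for instance, if $a$ occurs in two triples it suffices to hang a path of length two off $v_a$, and it is convenient to reduce from the \np-hard restriction of $3$-dimensional matching in which every element occurs in a bounded number of triples so that these gadgets stay uniform and small. With this setup, ``$t$ selected'' will mean $t$ belongs to the matching, and ``exactly one selected triple at every element'' is precisely the perfect-matching condition.

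\textbf{Correctness.} For the forward direction, given a perfect matching $M$ I use the selected decomposition on every gadget of a triple in $M$ and the unselected decomposition on all others; since each element lies in exactly one triple of $M$, every $v_a$ sees exactly one selected connector, and the element gadget absorbs the remaining connectors at $v_a$ into valid length-three paths. For the reverse direction I call a triple \emph{selected} if its gadget is covered in the selected state, and I argue that the set of selected triples is a perfect matching.

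\textbf{Main obstacle.} The technical heart is proving gadget \emph{rigidity}: that every partition of $E(G)$ into length-three paths must decompose each triple gadget in one of the two intended states and must leave exactly one selected connector at every element vertex. Concretely I would show that no path can cross from one triple gadget into another along an unintended route, that the six-cycle admits no ``mixed'' covering, and that the element gadget makes both ``no selected connector'' and ``more than one selected connector'' infeasible. This local case analysis is the crux of the reduction and follows the same pattern as the $P_2$-packing reduction of \cite{pantel2001graph}; the remaining bookkeeping --- divisibility of $|E(G)|$ by three, polynomial size, and $k=|E(G)|/3$ --- is routine.
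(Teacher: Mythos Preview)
Your overall plan matches the paper's: reduce from 3-dimensional matching by attaching to each triple a gadget with a ``selected'' and an ``unselected'' length-three decomposition, and force each element to be selected exactly once.  At that level the two proofs coincide, and the paper too asserts gadget rigidity rather than spelling out every case.

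Where they diverge is in gadget design, and there your sketch has a real gap.  The six-cycle gadget is \emph{not} two-state: besides your selected and unselected configurations one has, for example, $v_x r_1 r_2 r_3$, $w\, v_y r_3 r_4$, $r_1 r_6 r_5 r_4$, $r_5 v_z w'' w'''$, in which the $y$-connector is paired with one cycle edge and one element-side edge.  Whether such mixed states are globally infeasible depends entirely on the element gadgets (your degree-two pendant $P_2$ happens to kill this particular one by stranding its last edge), and you must also exclude cross-gadget paths such as $r_2 r_1 v_x r'_1$ that pass through a shared element vertex.  Moreover, your element gadget is specified only for elements of degree two, and 3-dimensional matching restricted to two occurrences per element is not known to be hard; you would need at least the degree-three variant and a matching element gadget.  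So the ``local case analysis'' you defer is not routine bookkeeping---it is essentially the entire proof, and it is not clear the six-cycle survives it.  The paper sidesteps all of this with a larger (eighteen-edge) gadget built from three columns, each carrying a degree-one pendant that forces an immediate binary choice propagating through the column; the three columns are tied together so the choices must agree.  Its interface to each element is a single pendant edge $(t,t')$ rather than a degree-dependent gadget, so ``exactly one selected triple per element'' is just ``each $(t,t')$ is covered exactly once,'' which comes for free from the exact-cover requirement.
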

\short{}{
\begin{proof}
    We reduce from 3-dimensional matching.
    Let $X,Y,Z,T$ be a 3-dimensional matching instance.
    For each $t \in X \cup Y \cup Z$ we add
    two vertices $t,t'$ and an edge $(t,t')$.
    Furthermore, for every $(t_1,t_2,t_3) \in T$, we attach a gadget to
    the vertices $t_1,t_2,t_3$ as depicted in Figure~\ref{fig:length-3a} and \ref{fig:length-3b}.
    We show that the resulting graph $G$ admits a packing of $P$
    such that all edges are covered
    if and only if $X,Y,Z,T$ admits a 3-dimensional matching.

    Consider a packing of $P$ in $G$ such that all edges are covered.
    Every $p_j\in P$ contains edges from at most one gadget.
    There are only two ways in which a gadget can be covered.
    The first way (left) covers $(t_1,t_1')$, $(t_2,t_2')$, $(t_3,t'_3)$ and the second way (right)
    does not cover $(t_1,t_1')$, $(t_2,t_2')$, $(t_3,t'_3)$.
    We define $M \subseteq T$
    to be the set of all tuples $(t_1,t_2,t_3)$ such that
    there exists a gadget whose paths cover
    $(t_1,t_1')$, $(t_2,t_2')$, $(t_3,t'_3)$.
    Every edge in $G$ is covered exactly once, therefore $M$
    is a 3-dimensional matching.

    On the other hand, assume $M$ to be a 3-dimensional matching.
    We construct a packing as follows:
    Let $(t_1,t_2,t_3) \in T$.
    We consider the gadget which is attached to $t_1,t_2,t_3$.
    If $(t_1,t_2,t_3) \in M$, we pack the paths as depicted on the left.
    Otherwise, we pack the paths as depicted on the right.
    In the resulting packing every edge is covered exactly once.
\end{proof}
}
\short{}{
\begin{figure}[tb]
	\tikzstyle{vertex}=[circle,fill=black,scale=0.3]
	\tikzstyle{edge} = [draw,line width=1.1pt,-]
	\tikzstyle{edgedot} = [draw,line width=1.1pt,-,dotted]
	\begin{center}
		\begin{minipage}{.49\textwidth}
			\centering
			\begin{tikzpicture}[>=latex,scale=0.6]
			\foreach \pos/\name in {
				{(0,0)/a0}, {(0,1)/a1}, {(0,2)/a2}, {(0,3)/a3}, {(0,4)/a4}, {(0,5)/a5},
				{(2,0)/b0}, {(2,1)/b1}, {(2,2)/b2}, {(2,3)/b3}, {(2,4)/b4}, {(2,5)/b5},
				{(3.5,0)/gap},
				{(5,0)/c0}, {(5,1)/c1}, {(5,2)/c2}, {(5,3)/c3}, {(5,4)/c4}, {(5,5)/c5},
				{(-1,2)/adongle},
				{(1,2)/bdongle},
				{(4,2)/cdongle}}
			\node[draw, circle, fill=lightgray, scale=0.5] (\name) at \pos {};
			\node () at (-0.7,4) {$t_1$};
			\node () at (1.3,4) {$t_2$};
			\node () at (4.3,4) {$t_3$};
			\node () at (-0.7,5) {$t'_1$};
			\node () at (1.3,5) {$t'_2$};
			\node () at (4.3,5) {$t'_3$};
			\foreach \source/ \dest in {
				a2/a3, a3/a4, a4/a5,
				b2/b3, b3/b4, b4/b5,
				c2/c3, c3/c4, c4/c5,
				a0/b0,b0/gap, gap/c0}
			\path[edge, color = red] (\source) -- node {} (\dest);
			\foreach \source/ \dest in {
				a0/a1, a1/a2, a2/adongle,
				b0/b1, b1/b2, b2/bdongle,
				c0/c1, c1/c2, c2/cdongle}
			\path[edge, color = blue, ] (\source) -- node {} (\dest);
			
			\end{tikzpicture}
			\caption{Optimal path packing if \\$(t_1,t_2,t_3)$ is part of the matching.}
            \label{fig:length-3a}
		\end{minipage}%
		\begin{minipage}{0.49\textwidth}
			\centering
			\begin{tikzpicture}[>=latex,scale=0.6]
			\foreach \pos/\name in {
				{(0,0)/a0}, {(0,1)/a1}, {(0,2)/a2}, {(0,3)/a3}, {(0,4)/a4}, {(0,5)/a5},
				{(2,0)/b0}, {(2,1)/b1}, {(2,2)/b2}, {(2,3)/b3}, {(2,4)/b4}, {(2,5)/b5},
				{(3.5,0)/gap},
				{(5,0)/c0}, {(5,1)/c1}, {(5,2)/c2}, {(5,3)/c3}, {(5,4)/c4}, {(5,5)/c5},
				{(-1,2)/adongle},
				{(1,2)/bdongle},
				{(4,2)/cdongle}}
			\node[draw, circle, fill=lightgray, scale=0.5] (\name) at \pos {};
			\node () at (-0.7,4) {$t_1$};
			\node () at (1.3,4) {$t_2$};
			\node () at (4.3,4) {$t_3$};
			\node () at (-0.7,5) {$t'_1$};
			\node () at (1.3,5) {$t'_2$};
			\node () at (4.3,5) {$t'_3$};
			\foreach \source/ \dest in {
				a2/a3, a3/a4, a2/adongle,
				b0/b1, b1/b2, b0/gap,
				c2/c3, c3/c4, c2/cdongle}
			\draw[edge, color = red] (\source) -- node {} (\dest);
			\foreach \source/ \dest in {
				b2/b3, b3/b4, b2/bdongle,
				a0/a1, a1/a2, a0/b0,                	 
				c0/c1, c1/c2, c0/gap}
			\draw[edge, color = blue] (\source) -- node {} (\dest);
			\foreach \source/ \dest in {a4/a5, b4/b5, c4/c5}
			\path[edgedot] (\source) -- node {} (\dest);
			\end{tikzpicture}
			\caption{Optimal path packing if \\$(t_1,t_2,t_3)$ is not part of the matching.}
            \label{fig:length-3b}
		\end{minipage}
	\end{center}
\end{figure}
}

\paragraph*{Bounded sum of path lengths}

The two previous results show that \PP is \np-hard even if the number of paths or the maximal
length of the paths is bounded by a constant.
At last, we the problem is in \fpt when parameterized by
the number of paths \emph{and} their length.
We give a randomized \fpt algorithm using color coding~\cite{parameterizedalgorithms,colorcoding}
that can easily be derandomized using perfect hash families~\cite{parameterizedalgorithms}.

\begin{theorem}\label{the:fptsummedpathlength}
	{\PP} parameterized by the summed length of all paths is in \fpt.
\end{theorem}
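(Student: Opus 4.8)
The plan is to use color coding. Write $L=\sum_{i=1}^{k}|p_i|$ for the parameter; since every $|p_i|\ge 1$ we have $k\le L$, any edge-disjoint embedding uses exactly $L$ edges of $G$, and, as these edges leave no isolated vertices, at most $2L$ vertices. The delicate feature of the problem is that the embedded paths must be \emph{edge}-disjoint yet may share vertices, so I would use \emph{two} independent random colorings: an edge coloring $c_E\colon E(G)\to[L]$ to control global edge-disjointness, and a vertex coloring $c_V\colon V(G)\to[2L]$ to certify, for each individual path, that its image is a genuine simple path rather than a closed walk. Fix a hypothetical solution and call a coloring pair \emph{good} for it if all $L$ used edges receive pairwise distinct $c_E$-colors and all used vertices receive pairwise distinct $c_V$-colors. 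As the two colorings are independent, the probability of being good is at least $\frac{L!}{L^{L}}\cdot\frac{(2L)!}{(2L)^{2L}}\ge e^{-3L}$, so $e^{3L}\operatorname{poly}(n)$ random trials detect a solution with constant error; this is derandomized in the usual way by replacing the random colorings with the functions of an $(|E|,L)$- and an $(|V|,2L)$-perfect hash family~\cite{parameterizedalgorithms}, whose product still has size $2^{O(L)}\log^{O(1)}n$ and guarantees a good pair for every potential solution.

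It then remains to decide, for fixed $c_E$ and $c_V$, whether there is an embedding in which every path is $c_V$-colorful (hence simple, since distinct vertex colors force distinct vertices) and the whole family is $c_E$-colorful (hence edge-disjoint). First I compute by a reachability dynamic program over states $(v,A,B)$ with $v\in V(G)$, $A\subseteq[L]$, $B\subseteq[2L]$ whether some path ending in $v$ realizes exactly the edge-color set $A$ and vertex-color set $B$; an edge $vu$ may be appended only when its $c_E$-color lies outside $A$ and $c_V(u)$ lies outside $B$, which keeps every realized path simultaneously edge-colorful and vertex-colorful. From this I read off the predicate $\phi(A)$, true iff some colorful simple path uses exactly the edge-colors $A$; note that $\phi$ depends on $A$ only through $|A|$ via the required length. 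Finally I combine the paths by a subset dynamic program $g(i,U)$, true iff $p_1,\dots,p_i$ admit $c_V$-colorful simple embeddings whose edges use exactly the colors in $U$, via $g(i,U)=\bigvee_{A\subseteq U,\,|A|=|p_i|,\,\phi(A)}g(i-1,U\setminus A)$, with $g(0,\emptyset)$ true. Disjointness of the chosen color blocks is precisely edge-disjointness of the paths, and the instance is accepted iff $g(k,[L])$ holds. Every table has size $2^{O(L)}\operatorname{poly}(n)$ and all transitions are polynomial, so the running time per coloring, and hence in total, is $2^{O(L)}\operatorname{poly}(n)$, which is \fpt.

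I expect the main obstacle to be exactly the point that dictates the two-coloring scheme: a single vertex coloring cannot enforce edge-disjointness, because two paths could traverse a common edge while touching each of its endpoints only once, and a single edge coloring cannot prevent an edge-colorful walk from revisiting a vertex. The care is therefore in separating the two requirements correctly—edges globally colorful for disjointness, vertices colorful \emph{within each path} for simplicity—and in verifying that the $e^{-3L}$ success bound survives their interaction (which it does, since a globally good coloring in particular makes every path internally colorful). Once this separation is set up, the per-path reachability table and the combining subset convolution are routine color-coding bookkeeping.
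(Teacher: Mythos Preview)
Your argument is correct, aside from the throwaway remark that ``$\phi$ depends on $A$ only through $|A|$'': this is false (different color sets of the same size can certainly have different truth values), but harmless, since your combining DP iterates over all $A\subseteq U$ and never relies on the claim.

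The paper also uses color coding, but with a lighter touch. It colors the edges with only $k$ colors---one per \emph{path}, not one per edge---lets $G_i$ be the subgraph of color-$i$ edges, and then checks for each $i$ independently whether $p_i$ embeds as a simple path in $G_i$, invoking the standard FPT algorithm for finding a path of prescribed length. Edge-disjointness is automatic because the $G_i$ are edge-disjoint by construction, so no global subset DP over edge-color sets is needed; the success probability that every edge of a fixed solution lands in the correct $G_i$ is $(1/k)^{L}\ge L^{-L}$. Your two-coloring scheme works and keeps everything self-contained, but the paper's decomposition into $k$ independent single-path subproblems sidesteps both the $2^{3L}$-state reachability table and the subset-convolution combining step, at the price of treating ``find a simple path of length $\ell$'' as a black box.
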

\short{}{
\begin{proof}
	Let $G$, $p_1,\dots,p_k$ be a {\pp} instance.
	The parameter $l$ is the summed length of $p_1,\dots,p_k$.
	We color the edges of $G$ with $k$ colors.
	Let $G_i$ be $G$ induced on all edges with color~$i$.
	We return yes if for $i \in \{1,\dots,k\}$, $p_i$ can be embedded into $G_i$.
	This can be done in FPT time~\cite{parameterizedalgorithms}.
	If $G,p_1,\dots,p_k$ is a no-instance then this algorithm returns no.
	If $G,p_1,\dots,p_k$ is a yes-instance
	then $p_1,\dots,p_k$ can be embedded edge-disjoint into $G$.
	With probability at least $l^{-l}$ the embedding of $p_i$ is colored with color $i$
	and the algorithm returns yes.
	We repeat the procedure $\Omega(l^l)$ times to obtain the correct
	answer with probability at least $2/3$.
	This yields a randomized FPT algorithm for {\pp}.
\end{proof}
}

For packing vertex-disjoint paths similar results are known:
The $P_2$-packing problem takes a graph $G$ and an integer $k$
and asks if there is a set of $k$ vertex-disjoint $P_2$ in $G$.
This problem is NP-complete~\cite{kirkpatrick1978completeness, pantel2001graph}.
Fernau and Raible given FPT algorithm parameterized
by the number of paths~\cite{fernau2008parameterized}.

\section{\PP Parametrized by Graph Dependent Attributes}\label{sec:bcd}


Earlier (\cref{the:epc-nphard-forests}), we showed that \EPC is \np-hard even on a single subdivided star.
So even for trees where there is only one node of degree higher than two the problem becomes \np-hard.
In this section we study further restrictions to forests and finally identify a polynomial time solvable case.
We do so by considering restrictions to the following three parameters:
number of vertices of degree at least three, the maximal degree, and the number of connected components. 
For an easier notation we define this combined parameter as the `bcd' of graph \( G \). 
It is a bound on \textbf{b}ranching nodes, connected \textbf{c}omponents and maximum \textbf{d}egree.

\begin{definition}
	Let \( G \) be a graph.
	Then \( \bcd(G) \) is the minimal \( k \in \N \) such that \(
	G \) has at most \( k \) nodes of degree larger than two, at
	most \( k \) connected components, and a maximal degree of at most \( k \).
\end{definition}

The above mentioned reduction showing \np-hardness for a subdivided star constructs a graph with unbounded degree.
What is the complexity if we limit the vertex degree to a constant, but in turn allow multiple components?
Unfortunately even for a forest of paths, thus a maximum degree of two, the problem remains \np-hard.
\np-hardness follows by an easy adaption of the proof of \cref{the:epc-nphard-forests}.
The constructed subdivided star has an even number \( 2m \) of legs of length \( \ell \).
Instead one could also use \( m \) disjoint paths of length \( 2\ell \).
Thus we can follow \np-hardness of \EPC even for forests of paths.

\begin{corollary}\label{col:NP-on-forests-of-paths}
	\EPC is \np-hard even on forests of paths.
\end{corollary}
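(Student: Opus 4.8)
The plan is to reuse, essentially verbatim, the reduction from \mwnp given in the proof of \cref{the:epc-nphard-forests}, changing only the target graph so that it becomes a disjoint union of paths instead of a subdivided star. Recall that that reduction starts from a \mwnp instance with weights $w_1,\dots,w_n$ and $k$ target sets, writes $W=\sum_{j=1}^{n}w_j$, and produces one covering path $p_j$ of length $2k\,w_j$ for each weight, while the graph consists of a center vertex joined to $2k$ legs, each of length $W$; two legs through the center form a walk of length $2W$ and stand for one of the $k$ sets. The key observation is that the center is never needed: I would simply replace the star by a forest $F$ consisting of $k$ vertex-disjoint paths $Q_1,\dots,Q_k$, each of length $2W$, and keep the list of covering paths exactly as $p_1,\dots,p_n$ with $|p_j|=2k\,w_j$. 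Since each component is a path, $F$ has maximum degree two, no branching vertices, and $k$ connected components, so it is a forest of paths.

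First I would record the counting, which is unchanged: $F$ has $k\cdot 2W = 2kW$ edges and the covering paths have total length $\sum_{j}2k\,w_j = 2kW$. Hence any edge-disjoint embedding that covers every edge exactly once must use all $n$ covering paths and tiles $F$ completely. (One may assume $k$ divides $W$, as otherwise both instances are trivially negative.)

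Next I would prove both directions. For completeness, given a partition $S_1,\dots,S_k$ witnessing the \mwnp instance, each set satisfies $\sum_{j\in S_i}w_j = W/k$, so the paths $\{\,p_j : j\in S_i\,\}$ have total length $\sum_{j\in S_i}2k\,w_j = 2W = |Q_i|$; laying them end to end tiles $Q_i$, and doing so for every $i$ yields an exact path cover of $F$. For soundness, given an exact path cover of $F$, I would use that every covering path is connected while the $Q_i$ are the connected components of $F$, so each $p_j$ is embedded entirely inside a single $Q_i$. Setting $S_i = \{\, j : p_j \text{ lies in } Q_i \,\}$ defines a partition of $\{1,\dots,n\}$, and exactness forces the paths assigned to $Q_i$ to tile its $2W$ edges, giving $\sum_{j\in S_i}2k\,w_j = 2W$, i.e.\ $\sum_{j\in S_i}w_j = W/k$. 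Thus $S_1,\dots,S_k$ solves the \mwnp instance.

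I expect the only point needing care — and it is exactly what makes the forest version cleaner than the star — to be this soundness step: in the subdivided star a covering path can run through the center and use two different legs, which is why the original proof must argue about pairing legs, whereas in a disjoint union of paths a connected subgraph cannot leave its component, so each $p_j$ is trapped inside one $Q_i$ and the partition can be read off directly. Membership in \np is routine, so this establishes \np-hardness of \EPC on forests of paths, a class of maximum degree two with no branching vertices in which only the number of components is unbounded.
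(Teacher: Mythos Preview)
Your proposal is correct and matches the paper's own argument exactly: the paper derives the corollary by the same one-line adaptation of the proof of \cref{the:epc-nphard-forests}, replacing the subdivided star with $2k$ legs of length $W$ by $k$ disjoint paths of length $2W$. Your added observation that soundness is cleaner here (each covering path is trapped in one component) is accurate and makes the adaptation even more transparent than in the star case.
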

Thus, if we drop either the degree or the number of components as parameters,
the problem becomes \np-complete, even if the remaining parameters are bounded
by a constant.
Thus the remaining question is: What is the complexity if we limit the vertex
degree to a constant, limit the number of connected components to a constant,
but in turn allow arbitrary many vertices of degree at least three?
We show hardness in this scenario even for the more restricted problem
of packing paths of ascending length, denoted by \PPP.


\medskip
	\begin{tabularx}{0.95\textwidth}{ r X p{0.5cm} }
		\multicolumn{2}{l}{\PPP} \\
		Input: & A graph \(G=(V,E)\) where \( |E(G)| = 1+ 2 + \dots + n \) for some \( n \in \N \). \\
		Question: & Does there exist an edge-disjoint embedding of paths \( p_1,\dots,p_n \) with lengths
		\( \ell_1=1,\dots,\ell_n=n \) into $G$ \suchthat each edge $e\in E$ is covered exactly once?
	\end{tabularx}
	
\medskip
We show \np-hardness of this restricted problem on subdivisions of a caterpillar with vertex degree at most eight.
We reduce from the following unary version of 3-partition.
This version is slightly non-standard since we require that no numbers occur twice and 
relax the condition to put exactly three elements into each partition.

\def\upp{\textsc{Unary 3-Partition}}

\medskip	
\begin{tabularx}{0.95\textwidth}{ r X p{0.5cm} }
	\multicolumn{2}{l}{\upp} \\
	Input: & A set of integers \( A = \{a_1,\dots,a_{3s} \} \subseteq \N  \) in unary encoding. \\
	Question: & Is there a partition of \( A \) into \( s \) sets of equal sum?
\end{tabularx}
\medskip

Hulett et al.~show that the above problem is \np-hard if we require each of the \( s \) partitions to contain exactly three elements~\cite{HULETT2008594}.
We get \np-hardness without the extra condition by increasing each number in \( A \) by adding a big number (for example \( \sum_{i=}^{3s} a_i \)).

We sketch how to reduce from \PPP on caterpillars with vertex degree at most eight to \upp.
Note that, each partition of a \upp\ instance must have size \( m = \frac{1}{s} \sum_{i=1}^{3s} a_i \).
Consider the paths \( \{ p_i \; | \; 1 \leq i \leq m, \; i \in A \} \) whose length occurs in \( A \).
We translate a partition of $A$ into an exact packing of these paths.
However, we have to account for the paths \( \{ p_i \; | \; 1 \leq i \leq m, \; i \notin A \} \) whose length occurs \textit{not} as an integer in set \( A \).
For each of these we introduce a path where it fits in precisely, and by an exchange argument we may assume it is packed there.
Now, roughly what remains to do is to construct a large caterpillar of low maximum degree where all these paths can be packed in.

\begin{theorem}\label{the:ppc-npcomplete-caterpillars}
	\PPP is \np-hard even for subdivided caterpillars with vertex degree at most eight.
\end{theorem}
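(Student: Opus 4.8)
The plan is to reduce from \upp, which is \np-hard by Hulett et al.~\cite{HULETT2008594} together with the big-number transformation sketched above, so that we may assume the $3s$ input numbers are pairwise distinct and no constraint on the sizes of the parts is imposed. Fix such an instance $A=\{a_1,\dots,a_{3s}\}$ and let $m=\frac1s\sum_{i=1}^{3s}a_i$ be the required common part sum; since each number is smaller than $m$, we have $A\subseteq\{1,\dots,m\}$ with all $a_i$ distinct. From $A$ I will build a subdivided caterpillar $G$ of maximum degree at most eight, with $|E(G)|=1+2+\dots+n$ for a suitable $n$, such that the induced paths $p_1,\dots,p_n$ of lengths $1,\dots,n$ admit a path-perfect packing of $G$ if and only if $A$ can be partitioned into $s$ parts each of sum $m$.

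The partition core consists of $s$ \emph{bins}, each a subdivided path of length exactly $m$, one per prospective part. A sub-collection of $\{p_a : a\in A\}$ whose lengths sum to $m$ tiles a bin exactly, so distributing the $A$-paths over the $s$ bins is literally a partition of $A$ into $s$ parts of sum $m$. The path lengths not occurring in $A$ are absorbed by \emph{fillers}: for every $i\in\{1,\dots,m\}\setminus A$ I add one leg of length exactly $i$, to be covered by $p_i$ alone. These bins and fillers already carry $s\cdot m+\bigl(\binom{m+1}{2}-\sum_{a\in A}a\bigr)=\binom{m+1}{2}$ edges, using that $\sum_{a\in A}a=sm$; so the paths $p_1,\dots,p_m$ account exactly for the bins and fillers, and the only remaining freedom is how the $A$-paths split among the bins.

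Next I connect the legs into one subdivided caterpillar, by identifying endpoints to form a spine with the remaining legs hanging off it, a bounded number per spine vertex. Two things must be arranged carefully. First, to stop a packed path from running through a junction from one leg into another, I would attach at each junction a small \emph{barrier gadget} (a few short pendant legs carrying forced paths) that occupies the possible turns, so that in every exact packing each path stays inside a single original leg. Second, since these gadgets and any spine edges increase the edge total, I pad the construction with further forced legs and extend the path list by the corresponding longer paths $p_{m+1},\dots,p_n$ so that the grand total is again a triangular number $1+2+\dots+n$, while keeping the maximum degree at most eight. With this in place the forward direction is immediate: a partition $S_1,\dots,S_s$ of $A$ gives the packing that lays $\{p_a:a\in S_j\}$ end to end along bin $j$, covers each filler $i$ by $p_i$, and covers every gadget and padding leg by its dedicated forced path.

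For the backward direction one shows, by an exchange argument processing lengths from $n$ downward, that we may assume every forced and filler leg is covered by its dedicated path and, thanks to the barrier gadgets, that no path crosses a junction; the surviving paths are then exactly $\{p_a:a\in A\}$, and the way they tile the $s$ bins reads off the partition. The main obstacle is exactly the junction behaviour: a naive identification of endpoints would let a path enter one leg, pass through a glue vertex, and continue into another, so that a bin need not be tiled by $A$-paths alone and the recovered assignment need not be a partition. Designing barrier gadgets that rule this out, while simultaneously keeping $G$ a subdivided caterpillar, holding the maximum degree to eight, and re-tuning all auxiliary edges so that $|E(G)|$ lands back on a triangular number, is the delicate heart of the proof; once the non-crossing property is secured the equivalence with \upp follows as above.
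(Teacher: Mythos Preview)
Your high-level framework matches the paper: bins of length $m$ to encode the $s$ parts, filler legs for the lengths in $\{1,\dots,m\}\setminus A$, and longer paths $p_{m+1},\dots,p_n$ to soak up the rest so that the edge count is triangular. The divergence is in the forcing mechanism, and that is exactly where your proposal has a real gap.

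You propose ``barrier gadgets'' at the junctions to stop packed paths from crossing from one leg into another, and you explicitly flag their design as ``the delicate heart of the proof'' without carrying it out. In the \PPP setting this is not a detail that can be deferred: the list of paths is rigidly $1,2,\dots,n$, so you cannot introduce dedicated gadget-covering paths at will; every gadget edge must be covered by some $p_j$ from that fixed list, and you then need an independent reason why $p_j$ must go there rather than elsewhere. This is circular unless you have a separate forcing argument, and your sketch gives none. It is also unclear how local pendant gadgets could block a path from turning at a spine vertex while keeping the graph a subdivided caterpillar of degree at most eight.

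The paper avoids gadgets entirely. It takes $n=8m$, lets the spine \emph{be} $p_n$, and hangs each remaining leg $p_i$ (for $i=n-1,\dots,m+1$, then the fillers, then the $s$ bins) by its \emph{middle} at a carefully chosen spine vertex $v_{j_i}$. The forcing is purely metric: for $i<n-1$ one checks $j_i+\lceil i/2\rceil<n$, so no path of length $n$ can use an edge of $p_i$; hence $q_n$ (and by symmetry $q_{n-1}$) must cover exactly $p_n$ and $p_{n-1}$. Peeling off these covered edges and iterating the same length inequality forces $q_i,q_{i-1}$ onto $p_i,p_{i-1}$ for $i=n-2,n-4,\dots,m+1$. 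Once all spine edges are used up, the remaining legs (fillers and bins) are edge-disconnected, so no crossing is possible and only the exchange argument for the fillers remains. In short, the paper gets the non-crossing property for free from the spine being saturated by the long paths, rather than from any local gadgetry; that is the idea your proposal is missing.
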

\short{}{
\begin{proof}
	We show \np-hardness by a polynomial time reduction from \upp.
	
	\newcommand{\pa}{\mathcal{P}_A}
	\newcommand{\pan}{\mathcal{P}_{\bar{A}}}
	\newcommand{\p}{\mathcal{P}}
	\newcommand{\pmm}{\mathcal{P}_{\mathcal{M}}}
	Let the set \( A = \{a_1,\dots,a_{3s}\} \subseteq \N \) be a \upp\ instance.
	Let \( m := \frac{1}{s} \sum_{i=1}^{3s} a_i \) be the `bucket size', in other words the target size of the sets of the partition.
	We may assume that \( s \geq 2 \) and \( a_i \leq m \) for every \( i \in \N \).
	We construct a \PPP instance with a maximal path length of \( n := 8m \).
	Let us denote by \( p_i \) a path of length \( i \) for \( i \in \N \) .
	In our construction, we use paths of lengths equal to the integers that do \textit{not} occur in \( A \), which is \( \pan = \{ p_i \; | \; 1 \leq i \leq m, \; i \notin A \} \).
	Further, we use a set of paths whose length is longer than any integer in \( A \), more specifically \( \p = \{ p_{m+1}, p_{m+2}, \dots, p_n \} \).
	Moreover, let \( \pmm \) be a set of \( s \) many paths of length \( m \).
	Note, that since \( a_1, \dots, a_{3s} \) are distinct, \( s < m \).
	We construct a \PPP instance by combining the paths \( \pmm, \pan \) and \( \p \) into a subdivided caterpillar \( T \) of degree at most eight.
	This then is a polynomial time reduction since the integers in \( A \) are given in unary.
	
	For a yes-instance of \upp, paths \( \{ p_i \; | \; 1 \leq i \leq m, \; i \in A \} \) may be embedded into \( \pmm \) covering every edges exactly once.
	Then the paths \( \p \) and \( \pan \) may be straight-forwardly embedded into the remaining paths in \( T \) of length \( 1,\dots,n \).
	Thus a yes-instance of \upp\ implies a \ppp of \( T \), independent of the precise construction.
	
	Let us specify the construction.
	Let path \( p_n \) consists of the vertex sequence \( v_0, \dots, v_n = v_{8m} \).
	In the following we identify a certain vertex \( v \) of \( p_n \) with certain vertex \( u \) of another path; thereby constructing a larger graph where \( u=v \).
	We denote by the \textit{middle} of a path an arbitrary vertex that maximizes the distance to its leaves (which is either unique or there are two choices).
	Identify vertex \( v_{4m} \) of path \( p_n \) with the middle of path \( p_{n-1} \).
	Identify vertex \( v_{4m} \) of \( p_n \) with the middle of \( p_{n-2} \) and the middle of \( p_{n-3} \), then \( v_{m+1} \) with the middle of \( p_{n-4} \) and the middle of \( p_{n-5} \) and so on;
	In other words, for \( i = n-2, \dots, m+1 \), identify the middle of \( p_i \) with \( v_{j_i} \) where \( j_i := 4m + \lfloor \frac{ (n-2)-i }{2} \rfloor \).
	Then, for \( p_i \in \pan \), identify the middle of \( p_i \) with \( v_{4m - i} \).
	Finally, identify the middle vertices of the paths \( \pmm \) with \( v_{3m-1}, \dots, v_{3m-s} \), respectively.
	Note that \( 3m-s \geq 2m \).
	This way we constructed a caterpillar \( T \) with maximum degree eight at vertex \( v_{4m} \) and using the paths \( \pmm, \pan \) and \( \p \).
		
	It remains to show that if the constructed \PPP instance has a perfect packing of paths \( q_1, \dots, q_n \) of length \( 1,\dots,n \) respectively, then also \( A \) has a partition into \( s \) sets each of sum \( m \).
	We claim that path \( q_n \) may only cover the edges of \( p_n \) and \( p_{n-1} \).
	Any other path \( p_i \) for \( i < n-1 \) has length \( < n \), and either side of path \( p_i \) from \( v_{j_i} \) has length \( \leq \lceil \frac{i}{2} \rceil \).
	The longest path starting in \( v_{j_i} \) is the path \( v_{j_i}, v_{j_i-1}, \dots, v_0 \) of length \( j_i \).
	It is easy to see that \( j_i + \lceil \frac{i}{2} \rceil < n \), thus path \( q_n \) does not fit into \( p_i \). 
	
	Because of symmetry, we may assume that \( q_n \) and \( q_{n+1} \) exactly cover the edges of \( p_n \) and \( p_{n-1} \).
	Inductively, for \( i = n-2, n-4, \dots, 3s+1 \), paths \( q_i, q_{i-1} \) may only cover the edges of \( p_i, p_{i-1} \).
	What remains to be covered are \( \pan \) and \( \pmm \) by paths \( q_1, \dots, q_m \).
	
	Consider the longest path \( p_j \in \pan \) that is not covered by the path of equal length \( q_j \) but instead by at least two paths, say by the paths \( q_1', q_2',\dots \).
	Then path \( q_j \) has to occur completely in a path \( p \) of \( \pmm \).
	We may exchange path \( q_j \) with paths \( q_1', q_2',\dots \), such that \( p_j \) is covered by \( q_j \), and \( q_1', q_2',\dots \) occur in \( p \).
	Repeat this exchange argument, until \( \pan \) is covered exactly by \( \{ q_i \; | \; 1 \leq i \leq m, \; i \notin A \} \).
	Then the paths \( \mathcal{Q}_A := \{ q_i \; | \; 1 \leq i \leq m, \; i \in A \} \) must exactly cover \( \pmm \).
	The mapping of \( \mathcal{Q}_A \) on \( s \) many paths \( \pmm \), each of length \( m \), defines a partition of \( A \) into \( s \) sets, each of size \( m \).
	Thus the set \( A \) is a yes-instance of \upp.
\end{proof}
}

The maximum degree eight in the theorem above was chosen to simplify the proof.
One can show \np-hardness even for smaller maximum vertex degree.

\medskip

\paragraph*{XP-algorithm for parameter \( \bcd\boldsymbol ( \boldsymbol G \boldsymbol ) \)}

We saw that if two bcd-parameters are constant and one bcd-parameter is unbounded
then \EPC is \np-complete.
We further study the complexity when parameterized by all three parameters.
We give an XP-algorithm for \PP parametrized by \( \bcd(G) \).
This means for every fixed \( k \), there is a polynomial time algorithm for graphs with \( \bcd(G) \leq k \).

\begin{theorem}\label{the:pp-upper-bound-fpt}
    There is a \( k!^k (n+k^2)^{O(k^2)} \)-time algorithm for \PP with \( k = \bcd(G) \).
\end{theorem}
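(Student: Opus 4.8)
The plan is to exploit that a graph with \( \bcd(G) \le k \) is, up to a bounded amount of branching, a union of long paths, and to separate the ``combinatorial'' routing of the paths through the few branching vertices from the ``arithmetic'' problem of making the lengths fit. First I would set up the structural decomposition. Let \( B \) be the set of vertices of degree larger than two, so \( |B| \le k \). Deleting \( B \) leaves a graph of maximum degree two, i.e.\ a disjoint union of paths and cycles, so the edges of \( G \) partition into \emph{segments}: maximal paths whose interior vertices have degree two, with endpoints in \( B \) or at leaves, together with the cycle/path components that avoid \( B \) entirely. Since every \( b \in B \) has degree at most \( k \), the number of segments incident to \( B \) is at most \( \sum_{b\in B}\deg(b) \le k^2 \), and the at most \( k \) components contribute at most \( k \) further segments; hence \( G \) consists of \( O(k^2) \) segments. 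A packed path is a simple path, so it visits each \( b \in B \) at most once and ``turns'' only at branching vertices; its image is therefore an arc that threads a sequence of consecutive segments.

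Second, I would guess the local behaviour at each branching vertex. Because the embedding is edge-disjoint, each edge incident to a vertex \( b \) is used by at most one packed path, so the packed paths induce a \emph{partial matching} \( M_b \) on the (at most \( k \)) edges incident to \( b \): matched pairs are the two edges of a path passing through \( b \), and each unmatched used edge is the end-edge of a path terminating at \( b \). The number of such local configurations at one vertex is at most \( (k!)\,2^{O(k)} \), so enumerating them at all \( \le k \) branching vertices costs at most \( (k!)^{k}2^{O(k^2)} \) guesses, and the \( 2^{O(k^2)} \) factor is absorbed into the polynomial part of the running time. Having fixed all \( M_b \), I would glue segments through the matched pairs to obtain \( O(k^2) \) \emph{tracks} (paths or cycles in \( G \)), each carrying, at every branching vertex it runs through, a \emph{forced-through} constraint (that branching vertex must lie strictly inside the image of a single packed path), and recording at its ends which segment-ends must be path endpoints and which must stay unused. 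Every genuine packing is consistent with exactly one choice of \( \{M_b\} \), so iterating over all choices is both sound and complete.

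Third, what remains is to place all of \( p_1,\dots,p_m \) as pairwise edge-disjoint arcs on the tracks so that the forced-through and endpoint constraints hold; since gaps (unused edges) are allowed in \PP, the order of arcs within a track is constrained only by the forced-through points. I would solve this by dynamic programming that sweeps all \( O(k^2) \) tracks in parallel: a state records, for each track, the position up to which it has been decided (an integer in \( \{0,\dots,n+O(k^2)\} \)) together with whether that position currently lies inside an arc, so the number of states is \( (n+k^2)^{O(k^2)} \); a transition either opens a gap or assigns one still-unused path of the appropriate length as an arc starting at the current position, checking that the arc is a simple path and that every forced-through point it spans is interior to it. Charging each path to its arc guarantees each path is used exactly once, and a final check verifies that all tracks are fully decided with all constraints met. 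Multiplying the number of guesses by the cost of the dynamic program gives the claimed \( (k!)^{k}(n+k^2)^{O(k^2)} \) running time.

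The main obstacle, and where most of the care goes, is the interaction between a packed path and a single segment, together with injectivity. A path need not use a whole segment (it may begin or end in a segment's interior), and a single packed path may thread several segments, so its length is the sum of its pieces across a track; the dynamic program must account for these partial uses and for paths that terminate mid-track. More delicate is that a glued track is only guaranteed to be an edge-disjoint trail, not a simple path: if a track revisits a branching vertex, an arc spanning both visits would violate injectivity, so the transition rule must forbid arcs that repeat a vertex while still forcing every branching vertex to be covered, which is exactly where an infeasible matching guess gets rejected. Proving that this local rule is equivalent to the existence of a global, simple, consistent embedding --- in both directions --- is the technical heart of the argument; the decomposition and the counting of guesses are comparatively routine.
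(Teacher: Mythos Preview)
Your high-level strategy coincides with the paper's: decompose $G$ into $O(k^2)$ maximal degree-two segments, guess at each of the at most $k$ branching vertices a matching on its (at most $k$) incident edges, and glue segments through matched pairs into $O(k^2)$ walks. The difference is in how the remaining placement problem is solved. The paper observes that between any two occurrences of the same vertex on a walk some packed path must end (since a simple path cannot repeat a vertex), and that one such ``cut'' per segment between branching vertices suffices; it therefore \emph{guesses} at most $k^2$ cut positions (each among $n+k^2$ candidates), obtaining $O(k^2)$ vertex-disjoint simple paths. What remains is then ordinary unary bin-packing with $O(k^2)$ bins, solved in $n^{O(k^2)}$ by the standard DP on fill levels. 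This cleanly removes the issues you flag --- arc simplicity, forced-through points, cyclic tracks --- by absorbing them into the guessing phase rather than the DP.

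Your alternative, a direct DP on the tracks, is workable but as written has a gap: the state you describe (per-track frontier plus an inside-arc bit) does not record which input paths have already been placed, so the transition ``assign one still-unused path of the appropriate length'' is not well-defined from the state alone. Processing the $m\le |E(G)|$ paths in a fixed order and adding the current index to the state fixes this at negligible cost, but it needs to be said. Also, the ``forced-through'' constraints are not actually required for soundness: any edge-disjoint placement of simple arcs on the tracks already gives an edge-disjoint embedding in $G$, whether or not it happens to agree with the guessed matching at every branching vertex; completeness holds because the matching that matches the true packing is among the guesses. Dropping these constraints (and the used/unused annotations at unmatched edges) simplifies your DP to something close to the paper's bin-packing.
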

\begin{proof}
	We give an algorithm, that given a graph \( G \) and a list \( P \) of paths \( p_1,\dots,p_k \), decides if there is an edge-disjoint embedding of \( p_1,\dots,p_k \) into \( G \).
	To do so, we guess a partition of \( G \) into eventually a set of vertex-disjoint paths \( \mathcal{X} \).
	Then it suffices to find an embedding of \( P \) into such a set of vertex-disjoint paths \( \mathcal{X} \).
	The remaining problem then is just a generalized bin-packing problem with \( O(k^2) \) bins, but encoded in unary; thus solvable in time \( n^{O(k^2)} \).
	Most technicality lies in guessing the vertex-disjoint paths \( \mathcal{X} \). 
	First we guess a partition into a bounded number of walks \( \mathcal{W} \).
	Later we need to partition \( \mathcal{W} \) further resulting in vertex-disjoint paths \( \mathcal{X} \).
	
%
	Let \( V_1 \) be the set of vertices of degree two.
	Let \( V_2^\star \) consist of a vertex of every connected component that is a circle.
	Let \( V_2 \) be the set of vertices of degree two that are not in \( V_2^\star \), and let \( V_{\geq 3} \) be the vertices of degree at least three including \( V_2^\star \).
	This seemingly odd definition allows us to work with walks starting and ending in \( V_1 \cup  V_{\geq 3} \) that cover every edge, in particular those in a circle.
	Because there are at most \( k \) connected components, \( |V_2^\star| \leq k \).
	Then since there are at most \( k \) vertices of degree at least three, we have \( |V_{\geq 3}| \leq 2k \).
		
	Assuming a yes-instance, there is an edge-disjoint embedding of paths \( P \) into graph \( G \).
	At every vertex \( v \in V_{\geq 3} \) every path of \( P \) contains at most two of the incident edges of \( v \).
	Thus at every vertex \( v \in V_{\geq 3} \) there is a maximal matching \( M_v \) of \( v \)'s incident edges such that no path in \( P \) contains two unmatched edges.
	
	We consider `direct' walks between `neighboring' vertices \( V_1 \cup V_{\geq 3} \):
	Let \( \mathcal{Q} \) be the set of walks between \( u,v \in V_1 \cup V_{\geq 3} \) with inner vertices from \( V_2 \),
		and further where no vertex among \( V_2 \) is repeated (though possibly \( u=v \)).
	We join these walks \( \mathcal{Q} \) to a set of walks \( \mathcal{W} \) according to matchings \( M_v \) for \( v \in V(G) \).
	Whenever two walks \( w_1, w_2 \) end at some edges \( uv \) respectively \( u'v \), and \( uv \) is matched to \( u'v \) by \( M_v \), then join walks \( w_1 \) and \( w_2 \) at edges \( uv,vu' \).
	This procedure terminates and yields a well defined set of walks \( \mathcal{W} \).
	
	Note that every edge is covered by a walk \( \mathcal{Q} \) and thus also every edge is covered by a walk \( \mathcal{W} \).
	We further claim that every path \( p \) of \( P \) is a subsequence of edges of some walk \( w \in \mathcal{W} \).
		Assuming otherwise, there are walks \( w_1, w_2 \) ending at edges \( uv \) and \( u'v \).
		Then \( v \) is not a leaf, and thus \( v \in V_{\geq 3} \).
	Then matching \( M_v \) matches edges \( uv \) and \( u'v \), and thus \( w_1, w_2 \) had to be joined to a single walk.
	
	Thus for a yes-instance there is at least one set of matchings \( M_v, v \in V_{\geq 3} \) which determines walks \( \mathcal{W} \) such that \( P \) may be embedded into \( \mathcal{W} \).	
	An algorithm may try the possible partition of edges into such a set of walks \( \mathcal{W} \) as follows.
	Guess for each vertex \( v \in V_{\geq 3} \) a maximal matching \( M_v \) of its incident edges.
	There are at most \( k \) high degree vertices \( V_{\geq 3} \setminus V_2^\star \), each with at most \( k \) incident edges.
		(Also we have a matching for \( V_2^\star \), though since there are only two incident edges, there is only one possible matching.)
	Thus the algorithm tries at most \( k!^k \) possibilities.
	Then combine the paths \( \mathcal{Q} \) to walks \( \mathcal{W} \) according to the matchings, which is possible in polynomial time.
	
	We claim that \( \mathcal{W} \) has at most \( k^2 \) walks.
	Every walk in \( \mathcal{W} \) has two endpoints, and the endpoints are among \( V_1 \cup V_{\geq 3} \).
	Clearly, at every leaf \( v \in V_1 \) at most one path ends.
		Further, there are at most \( k^2 \) leaves in the input graph of \( \leq k \) vertices of degree \( \geq 3 \) and maximal degree of \( k \).
	If at a vertex \( v \in V_{\geq 3} \) two walks \( w_1, w_2 \) end, there are edges \( uv \) of \( w_1 \) and \( u'v \) of \( w_2 \) unmatched by \( M_v \), in contradiction to a maximal matching \( M_v \).
	Thus also at every vertex \( v \in V_{\geq 3} \) at most one walk ends.
	Then there are at most \( k^2 + 2k \) endpoints of walks, and thus there are at most \( \lfloor (k^2 + 2k)/2 \rfloor \leq k^2 \) walks in \( \mathcal{W} \).
	
	Consider a walk \( w \in \mathcal{W} \) where a vertex \( v \) occurs more than once.
	Recall that the embedding of a path \( p \in P \) of a yes-instance is injective, thus no vertex \( v \in V(G) \) occurs twice in the same path.
	A naive approach would be to now solve the bin-packing problem of `weights' \( P \) and `bins' \( \mathcal{W} \).
	Then, however, a solution to the bin-packing would may potentially translate to an embedding of a path where a vertex occurs twice.
	Therefore let us guess a partition into paths without multiple occurrence of vertices, as follows.
	
	Between two occurrences of \( v \) on walk \( w \) there must be vertex \( u \) (possibly an occurrence of \( v \) itself) which is the endpoint of two different paths.
	Therefore there is a partition of the walks \( \mathcal{W} \) into vertex-disjoint paths \( \mathcal{X} \), where still paths \( P \) have an embedding into \( \mathcal{X} \).
	We may describe this partition by `cuts' of \( \mathcal{W} \) specified by a vertex \( v \) in the union of walks from \( \mathcal{W} \).
		Note, that in the union of walks \( \mathcal{W} \), each high degree vertex \( v \in V_{\geq 3} \setminus V_2^\star \) occurs \( \mbox{deg}(v) \leq k \) times.
	Thus there are to up to \( n + k^2 \) potential cut vertices.
	
	We claim that at most \( k^2 \) cuts \( C \) are necessary to cut the walks \( \mathcal{W} \) into vertex-disjoint paths \( \mathcal{X} \).
	Assume, that there is a cut vertex \( v \in C \) which is on an inner vertex of a path between \( V_1 \) and \( V_{\geq 3} \).
	Then joining its incident vertex-disjoint paths results in a vertex-disjoint path.
	Thus we may assume that the cuts \( C \) are at vertices from walks of \( \mathcal{Q} \) between vertices among \( V_{\geq 3} \).
	Let \( \mathcal{Q}_{\geq 3} \) be the set of paths \( \mathcal{Q} \) with endpoints in \( V_{\geq 3} \).
	Consider the multi-graph with loops on vertex set \( V_{\geq 3} \) with an edge between \( u,v \in V_{\geq 3} \) for every path \( \mathcal{Q}_{\geq 3} \) with endpoints \( u \) and \( v \).
	Since \( |Q_{\geq 3}| \leq k \) and the degree of every vertex \( v \in Q_{\geq 3} \) is at most \( k \), this multi-graph has at most \( k^2 \) edges.
	Then also there are at most \( k^2 \) paths \( \mathcal{Q}_{\geq 3} \).
	Consider a set of more than \( k^2 \) vertices \( C \subseteq V \) that cut \( \mathcal{Q} \) into vertex-disjoint paths \( \mathcal{X} \).
	Then there is a path of \( Q_{\geq 3} \) containing distinct cut vertices \( u,v \in C \).
	Let \( x \in \mathcal{X} \) be the path between \( u \) and \( v \).
	Joining them with the incident path at, say \( u \), results in a vertex-disjoint path.
	Thus cutting \( \mathcal{W} \) at vertices \( C \setminus \{u\} \) still results in a set of vertex-disjoint paths.
	Therefore at most \( k^2 \) cuts of walks \( \mathcal{W} \) are necessary to yield vertex-disjoint paths \( \mathcal{X} \).
	
	Let us utilize this observation in the design of our algorithm.
	Guess up to \( k^2 \) cuts \( C \) from the \( n + k^2 \) potential cuts.
	Cut the previously guessed walks \( \mathcal{W} \) into subpaths according to cuts \( C \).
		We may force exactly \( k^2 \) cut vertices by allowing \( C \) to be a multi-set containing also leaves, whose cut has no effect.
	This way, we try another \( (n+k^2)^{k^2} \) possibilities of cut vertices \( C \).
	Then cut the previously guessed \( \leq k^2 \) walks \( \mathcal{X} \) at the \( k^2 \) cut positions.
	If the resulting set of walks is not vertex-disjoint, discard this guess.
	Otherwise we obtain \( \leq 2 k^2 \) vertex disjoint paths \( \mathcal{X} \), since every cut increases the number of paths by one.
	This resembles a bin-packing problem in unary encoding with \( k^2 \) bins of different sizes and total capacity \( n \).
	We may apply standard dynamic programming technique to test in \( n^{O(k^2)} \) time whether the sizes of the paths \( P \) fit into the bins in the sizes of \( \mathcal{X} \).
	If the paths \( P \) fit in some guessed paths \( \mathcal{X} \), then corresponding partition of the edges in \( G \) yields paths \( P \).
		Thus there is an edge-disjoint embedding of \( P \) into \( G \).
	For the other direction, if the edges of \( G \) can be partitioned into paths \( P \), then as argued before there is a set \( \mathcal{X} \) according to this partition and the there is a solution to the dynamic problem.
	The runtime is \( k!^k (n+k^2)^{O(k^2)} \cdot \mbox{poly}(n) = k!^k n^{O(k^2)} \cdot \mbox{poly}(n) \) where \( \mbox{poly} \) is a polynomial.
\end{proof}

Can we achieve a better runtime than \( k!^k n^{k^2+O(1)} \), in particular decrease the dependence on \( k \) in the exponent of \( n \)?
Not significantly unless ETH fails, as the following reduction from \mwnp shows.

\begin{theorem}\label{the:pp-lower-bound-fpt}
	There is no algorithm that decides \PP in time $ n^{o({k^2}/{\log k})} $ with \( k = \bcd(G) \) unless ETH fails.
\end{theorem}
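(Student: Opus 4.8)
The plan is to reduce from \mwnp, exploiting the fine-grained lower bound of \cref{lem:multiwayhard}: under ETH there is no $f(k')\,N^{o(k'/\log k')}$ algorithm for \mwnp with $k'$ sets and input size $N$. The whole point will be to encode an instance with $k'$ sets into a \PP (in fact \EPC) instance whose $\bcd$-value is only $k=\Theta(\sqrt{k'})$, with merely polynomial blow-up in size. Granting such a reduction, an $n^{o(k^2/\log k)}$ algorithm for \PP would, via $k^2=\Theta(k')$ and $\log k=\Theta(\log k')$, solve \mwnp in time $\mathrm{poly}(N)^{o(k'/\log k')}=N^{o(k'/\log k')}$, contradicting ETH. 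Hence all the work lies in squeezing $k'$ sets into $\Theta(\sqrt{k'})$ units of $\bcd$; the arithmetic at the end is immediate.

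The mechanism I would use is grid grouping. Set $t=\lceil\sqrt{k'}\rceil$, so $k'\le t^2$, and index the sets by pairs $(i,j)\in[t]\times[t]$. The construction refines \cref{the:pc-w1hard-bounded-treewidth}, where a weight $w_j$ was assigned to a set by letting one of $k'$ packed paths take a length-$w_j$ detour. There every selection vertex had degree $\Theta(k')$; here I route each selection through a two-level crossbar with $t$ row-hubs and $t$ column-hubs. A path representing set $(i,j)$ would enter through row-hub $i$ and leave through column-hub $j$. The hubs are the only vertices of degree larger than two, there are $2t$ of them, each of degree $O(t)$, and the number of components is kept at $O(t)$, so that $\bcd(G)=\Theta(t)=\Theta(\sqrt{k'})$. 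Weights are encoded exactly as before, by detour gadgets whose surplus length equals $w_j$, and the prescribed target lengths of the packed paths enforce that each set receives total weight $\tfrac1{k'}\sum_i w_i$. Since the weights are unary, the whole construction is polynomial, giving $n=\mathrm{poly}(N)$.

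Correctness would split in the usual two directions. A solution of \mwnp yields an edge-disjoint cover by sending the path for set $(i,j)$ through the matching row- and column-hub and taking the detours prescribed by that set; conversely, because every edge must be covered exactly once, edge-disjointness forces each track and each hub pair to be used by a unique path, so a valid packing reads off a consistent assignment of weights to the $t^2$ sets. Because I only ever build an \EPC instance, hardness transfers to \PP without changing $\bcd(G)$ or $n$ asymptotically, as already observed in the introduction.

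The main obstacle is the crossbar itself. The delicate point is to enforce, using only $\Theta(\sqrt{k'})$ branching vertices of degree $\Theta(\sqrt{k'})$, that the packing induces a \emph{consistent} assignment to the $t^2$ sets --- that a path cannot cheat by mixing rows and columns, by entering two row-hubs, or by splitting a single detour across several tracks --- while simultaneously keeping the number of connected components at $O(\sqrt{k'})$. This is exactly where all three $\bcd$-parameters must be balanced against one another, and it is the step demanding the most careful gadget engineering. Once the crossbar correctly serializes each selection, the length and weight bookkeeping is routine and essentially inherited from \cref{the:pc-w1hard-bounded-treewidth}, and the exponent translation at the start closes the argument.
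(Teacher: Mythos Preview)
Your high-level strategy is exactly right: start from the \mwnp lower bound of \cref{lem:multiwayhard} and build a \PP instance whose $\bcd$ is $\Theta(\sqrt{k'})$, so that the exponent translates as $k^2/\log k = \Theta(k'/\log k')$. The paper does precisely this. Where you diverge is in the construction, and your route is both more complicated and, as written, has a gap.

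You propose to adapt the series-parallel construction of \cref{the:pc-w1hard-bounded-treewidth} and replace each high-degree selection vertex by a $t\times t$ crossbar of row- and column-hubs. But in that construction there are $b+1$ selection vertices $v_0,\dots,v_b$, one for every weight, and $b$ is polynomial in the input size $N$, not a function of $k'$. If you install a crossbar with $2t$ hubs at each $v_j$, the number of vertices of degree at least three becomes $\Theta(bt)$, which is unbounded in $k'$; the $\bcd$ bound collapses. If instead you mean a single global crossbar at the two ends, that does nothing for the degree of the intermediate $v_j$, which remains $2k'$. You acknowledge that balancing all three $\bcd$ parameters in the crossbar is ``the step demanding the most careful gadget engineering'', and indeed you have not shown how to do it; as stated, the construction does not achieve $\bcd(G)=\Theta(\sqrt{k'})$.

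The paper's construction sidesteps all of this by abandoning the series-parallel gadget and instead refining the subdivided-star reduction of \cref{the:epc-nphard-forests}. One star with $2k'$ legs has $\bcd = 2k'$; but $\sqrt{k'}$ disjoint subdivided stars, each with $2\sqrt{k'}$ legs of the same leg-length, have $\sqrt{k'}$ components, $\sqrt{k'}$ branching vertices, and maximum degree $2\sqrt{k'}$, so $\bcd=\Theta(\sqrt{k'})$. Each pair of legs in a star encodes one of the $k'$ sets, exactly as before, and the correctness argument is identical to \cref{the:epc-nphard-forests}. No crossbar, no consistency issues, no gadget engineering: the three $\bcd$ parameters are balanced simply by spreading the $2k'$ legs across $\sqrt{k'}$ centers.
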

\short{}{
\begin{proof}
    We give an \fpt reduction from \mwnp parameterized by the number of sets.
    Assume we have an instance $I$ with $k$ sets. 
    Without loss of generality, we can assume $\sqrt{k} \in \N$.
    We construct an equivalent instance of the \pp problem of polynomial size
    with $\sqrt{k}$ components, maximal degree $2\sqrt{k}$ and $\sqrt{k}$
    vertices of degree $\geq 3$.
    By \cref{lem:multiwayhard}, this is sufficient.
	
	Each weight $w_j$ of $I$ becomes a path $p_j$ with length $|p_j| = w_j\cdot 2k$. 
	We construct $\sqrt{k}$ subdivided stars, each with degree $2\sqrt{k}$. 
	Each leg of a subdivided star has length $\frac{1}{2k}\sum_{j=1}^n |p_j|$. 
	
    Let $S_1,\dots,S_k$ be a feasible solution for $I$.
    The sum of weights for each set $S_i$
    is $\sum_{j\in S_i} w_j = \frac{1}{k} \sum_{j=1}^n w_j$. 
    The corresponding paths can be packed exactly into two legs of a subdivided
    star because they have a summed length of
    \[
    \sum_{j\in S_i} |p_j|= 2k
    \sum_{j\in S_i} w_j = 2 \sum_{j=1}^n w_j = \frac{1}{k}\sum_{j=1}^n |p_j|.
    \]
	
	A solution for the \pp instance can easily be transferred to a solution for \mwnp. 
	Each pair of two legs in a subdivided star is one set. 
    Identifying such a pair is easy: A path is either part of both
    legs because it contains the center vertex of the star or we choose two
    legs where the paths end at the center vertex. 
	The number of these legs must be even. 
	
	Because we have $\sqrt{k}$ subdivided stars with degree $2\sqrt{k}$, we can construct $k$ leg pairs which represent the sets. 
	Thus, the parameter in our \fpt reduction becomes quadratic. 
	If the \mwnp can not be solved in $O(n^{{k}/{\log k}})$, the \pp problem, parameterized by the number of components, maximum degree and vertices of high degree, cannot be solved in $O(n^{{k^2}/{\log k}})$. 
\end{proof}
}

\section{Conclusion}

We showed that edge-disjoint packing of paths into a graph is a
very hard problem.  Even if the input graph is a subdivided star or a
linear forest the problem is hard. 
If we parameterize the problem by the number of paths,
the problem remains hard even
for input graphs with treewidth two.  
However, it becomes fixed parameter tractable on forests.
A natural open problem 
is to not embed paths, but more general graphs such as trees or cycles.

\bibliographystyle{plainurl}
\bibliography{references}

\end{document}